\def\BibTeX{{\rm B\kern-.05em{\sc i\kern-.025em b}\kern-.08em
    T\kern-.1667em\lower.7ex\hbox{E}\kern-.125emX}}
\newtheorem{theorem}{Theorem}
\newtheorem{assumption}{Assumption}
\newtheorem{corollary}[theorem]{Corollary}
\newtheorem{lemma}[theorem]{Lemma}
\newtheorem{definition}{Definition}
\newtheorem{remark}{Remark}
\newcommand{\ie}{\textit{i.e.}}
\newcommand{\eg}{\textit{e.g.}}
\DeclareMathOperator{\VEC}{vec}
\DeclareMathOperator{\rank}{rank}
\DeclareMathOperator{\diag}{diag}
\newcommand{\RR}{\mathbb{R}}
\newcommand{\XX}{\mathcal{X}}
\newcommand{\QQQ}{\mathcal{Q}}
\newcommand{\SSS}{\mathcal{S}}
\newcommand{\RRR}{\mathcal{R}}
\newcommand{\PP}{\mathcal{P}}
\newcommand{\mc}{\mathcal}
\begin{document}
\title{Data-Driven Model Reduction by Moment Matching for Linear and Nonlinear Parametric Systems}
\author{Hanqing Zhang, \IEEEmembership{Student Member, IEEE}, Junyu Mao, \IEEEmembership{Student Member, IEEE}, Mohammad Fahim Shakib, \IEEEmembership{Member, IEEE}, and Giordano Scarciotti, \IEEEmembership{Senior Member, IEEE}
\thanks{The authors are with the Department of Electrical and Electronic Engineering, Imperial College London, SW7 2AZ London, U.K. (e-mail: hanqing.zhang21@imperial.ac.uk; junyu.mao18@imperial.ac.uk; m.shakib@imperial.ac.uk; g.scarciotti@imperial.ac.uk). }}

\maketitle

\begin{abstract}
Theory and methods to obtain parametric reduced-order models by moment matching are presented. The definition of the parametric moment is introduced, and methods (model-based and data-driven) for the approximation of the parametric moment of linear and nonlinear parametric systems are proposed. These approximations are exploited to construct families of parametric reduced-order models that match the approximate parametric moment of the system to be reduced and preserve key system properties such as asymptotic stability and dissipativity. The use of the model reduction methods is illustrated by means of a parametric benchmark model for the linear case and a large-scale wind farm model for the nonlinear case. In the illustration, a comparison of the proposed approximation methods is drawn and their advantages/disadvantages are discussed.
\end{abstract}

\begin{IEEEkeywords}
Model reduction, parametric systems, moment matching, data-driven.
\end{IEEEkeywords}

\section{Introduction}
\label{sec:introduction}
\IEEEPARstart{H}{igh-fidelity} mathematical dynamical models are of great importance for control analysis and design~\cite{benner2015survey}. Such high-fidelity models often tend to be of high order and be described by numerous ordinary differential equations, and thus are computationally expensive and time-consuming to simulate and analyse~\cite{antoulas2005approximation,scarciotti2017data}. Examples include models for power systems~\cite{scarciotti2016low}, weather forecast models~\cite{antoulas2005approximation}, and models for differential systems that arise from finite element analysis~\cite{scarciotti2024interconnection}. A solution to this intensive computational problem is model reduction, which aims to obtain a ``simpler", \eg, with a lower number of states, reduced-order model that approximates the behaviour of the original full-order model.

Several model reduction techniques have been developed for both linear and nonlinear systems. For linear systems, two main groups of techniques play a significant role: singular value decomposition (SVD) based approximation methods and Krylov-based approximation methods (also known as moment matching methods) \cite{antoulas2005approximation}. SVD-based methods include balanced truncation \cite{moore1981principal}, Hankel-norm approximations \cite{glover1984all}, and proper orthogonal decomposition (POD) \cite{willcox2002balanced}. Moment matching methods, which yield a reduced-order model whose transfer function (and derivatives of this) takes the same values (\ie, moments) as the transfer function (and derivatives of this) of a given large-scale system at specific frequencies (also called interpolation points), include the rational interpolation method \cite{gallivan2004sylvester,gugercin2008h_2}, the interconnection-based framework \cite{astolfi2010model,scarciotti2017nonlinear,scarciotti2024interconnection}, and the Loewner framework \cite{mayo2007framework}. The aforementioned methods have been enhanced and extended to nonlinear systems, see, for instance, \cite{scherpen2002nonlinear} for nonlinear balancing, \cite{willcox2002balanced} for enhancements of POD, and \cite{simard2021nonlinear} for the nonlinear Loewner framework.

However, most traditional model reduction techniques lose efficacy for systems whose governing differential equations are dependent on a set of parameters. The reason lies in that most of the methods are often intrusive, meaning that they require the information of the original full-order models. Thus, it is unaffordable and undesirable in terms of computing power to derive a reduced-order model for every possible value of each parameter. Parametric model reduction resolves this problem by obtaining a parametric reduced-order model which approximates the behaviour of the original full-order model over a range of potential values of the parameters. Such reduction methods have been extensively investigated since the '00s, because of their critical role in design, control, optimisation, and uncertainty quantification settings~\cite{benner2015survey}. The development of parametric model reduction methods has additionally been motivated by several applications, \eg, large-scale systems parametrised by material elasticity~\cite{lieu2007adaptation}, physical geometry~\cite{daniel2004multiparameter}, and boundary conditions~\cite{feng2005preserving}.

For linear parametric systems, parametric model reduction techniques are obtained by extending moment matching methods to the parametric case, see~\cite{feng2005preserving,gunupudi2002analysis,daniel2004multiparameter}. These methods consider moment as the value of the transfer function with respect to not only the Laplace variable but also the model parameter. Another method is matrix interpolation (also known as manifold interpolation) \cite{panzer2010parametric}. However, it has been shown in \cite{schwerdtner2022structured} that the obtained parametric reduced-order model only approximates the dynamics of the full-order model accurately at the parameter interpolation points, and often results in large approximation errors at values between the interpolation points. Techniques that extend balanced truncation \cite{wittmuess2016parametric}, the Loewner framework \cite{lefteriu2011parametric}, and the optimal interpolation method \cite{mlinaric2023optimal} to linear parametric systems are mentioned here for the sake of completeness. 

For the reduction of nonlinear parametric systems, a recent overview can be found~\cite{goyal2024dominant}, which extends linear interpolatory methods to certain classes of polynomial structured parametric systems. Such an extension is achieved by introducing the generalised multivariable transfer functions induced by the Volterra series representation of these systems. 
To deal with more general classes of nonlinear parametric systems, trajectory piecewise linear methods have been proposed in~\cite{bond2007piecewise}. 
Note, however, that this method essentially reduces the problem to a linear parametric model reduction problem and does not deal with the nonlinearity directly, potentially limiting its global effectiveness. 
In addition, POD-based methods (often combined with the discrete empirical interpolation method (DEIM)), which address system nonlinearity directly, have been shown to be effective in certain nonlinear parametric cases, see, \eg,~\cite{afkham2017structure}. 
Note that these methods typically require access to not only input-output data but also state trajectory data.

\textbf{Contributions.} The main contributions of this article are summarised here. (1)~We define the parametric moment of a linear or nonlinear parametric system based on which a family of parametric reduced-order models is constructed. (2)~Since the parametric moment is often difficult and possibly intractable to obtain, we propose two methods, via series expansion and basis functions, to approximate the parametric moment. (3)~In addition, we propose a data-driven enhancement of the method for settings in which the knowledge of the full-order model or the parameter dependency is unavailable. (4)~Using the approximate parametric moment, we construct a family of parametric reduced-order models that match the approximate parametric moment of the original parametric system and, simultaneously, can preserve specific properties, notably asymptotic stability and dissipativity. 

The advantages of our parametric model reduction framework are manifold. (1)~The framework achieves parametric model reduction for both linear and nonlinear parametric systems. (2)~Extra freedom is provided by which the resulting parametric reduced-order model can retain specific properties of the original parametric system, \eg, asymptotic stability and dissipativity. (3)~The method via series expansion provides high-fidelity parametric reduced-order models that match the moments with a guaranteed accuracy within a certain neighbourhood of the expansion point on the parameter space. (4)~In contrast, the method via basis functions generates parametric reduced-order models which exhibit a uniformly small error across the entire parameter space. (5)~The (nonlinear) data-driven version of the method via basis functions has computational complexity proportional to ``the reduced order'' times ``the number of parameter sampling points''. This is computationally efficient compared to intrusive model reduction methods while, with respect to other data-driven methods, \eg, POD-based methods, DEIM, and neural networks, only (noisy) input-output data are required. 

A preliminary small part of this work is under review for presentation at the 2025 conference on decision and control (CDC 2025)~\cite{Zhang2025Data}, addressing only the stability-preserving parametric model reduction problem for linear parametric systems using a series expansion approximation of the moment.
For linear parametric systems, the current paper extends~\cite{Zhang2025Data} to the approximation of the moment using basis functions and to the dissipativity-preserving problem.
The parametric moment matching method for nonlinear parametric systems and the data-driven enhancements, both in the linear and nonlinear parametric settings, have not been presented elsewhere.

\textbf{Organisation.} The rest of the paper is organised as follows. 
Section~\ref{sec:momwntnMomentMatching} recalls the notion of moment and moment matching for linear non-parametric systems as developed in~\cite{astolfi2010model,scarciotti2017nonlinear}, while Section~\ref{sec:paraMomentMatching} extends the results to the linear parametric case by introducing the notion of parametric moment and constructing a family of parametric reduced-order models. 
Subsequently, a subclass of this family that is stability or dissipativity preserving is presented in Section~\ref{sec:dissipativity}.
In Section~\ref{sec:MMforLinearParametricSys}, two methods are developed to approximate the parametric moment of a given linear parametric system. Then, the resulting families of parametric reduced-order models are constructed. In particular, Section~\ref{sec:MatrixBased} presents the method via series expansion while Section~\ref{sec:dataDriven} proposes the method via basis functions (with its data-driven version in Section~\ref{sec:linearDataDriven}). The effectiveness of these two methods is illustrated in Section~\ref{sec:linearExample} using a parametric benchmark model. The framework is enhanced in Section~\ref{sec:MMforNonlinearParametricSys} to achieve parametric model reduction for nonlinear parametric systems. In addition, the nonlinear enhancement is demonstrated using a large-scale model of a wind farm. Finally, Section~\ref{sec:conclusion} contains some concluding remarks.

\textbf{Notation.} We use standard notation. The symbols $\mathbb{R}$ and $\mathbb{C}$ denote the set of real numbers and complex numbers, respectively; $\RR_{>0}$ denotes the set of positive real numbers; and $\mathbb{C}_{<0}$ ($\mathbb{C}_0$) denotes the set of complex numbers with negative (zero) real part. The symbol $I_n$ denotes the identity matrix of dimension $n \times n$, and $\sigma(A)$ denotes the spectrum of the matrix $A \in \mathbb{R}^{n \times n}$. The symbol $\otimes$ indicates the Kronecker product and the superscript $\top$ denotes the transposition operator. For a vector $x \in \mathbb{R}^{n}$, $\|x\|_2$ indicates its Euclidean norm; and the vectorisation of a matrix $A \in \mathbb{R}^{n \times m}$ is denoted by $\VEC(A)$, which is a $nm \times 1$ vector obtained by stacking the columns of the matrix $A$ one on top of the other. The symbol $\iota$ denotes the imaginary unit.
For a symmetric matrix $A \in \mathbb{R}^{n \times n}$, $A \succ 0$ ($A \prec 0$) denotes that $A$ is positive- (negative-) definite.
In symmetric block matrices, the element $\star$ is induced by transposition. The symbol $\mathcal{O}(\cdot)$ denotes the big O notation.

\section{Moment Matching for Linear Parametric Systems\label{sec:preliminaries}}
In this section, we first recall the notion of moment and moment matching for linear non-parametric systems, as proposed in~\cite{astolfi2010model,scarciotti2017nonlinear}, in Section~\ref{sec:momwntnMomentMatching}. Then we extend these results to linear parametric systems in Section~\ref{sec:paraMomentMatching} by introducing the notion of parametric moment.
Finally, in Section~\ref{sec:dissipativity}, we present a method that preserves the stability or dissipativity properties of the original parametric system for the parametric reduced-order model.

\subsection{Notion of Moment and Moment Matching}\label{sec:momwntnMomentMatching}
Consider a linear time-invariant (LTI), single-input single-output (SISO), continuous-time parametric system described by
\begin{equation}
\begin{aligned}
    \Dot{x}{(t,p)} &= A(p)x{(t,p)} + B(p)u(t),\\
    y{(t,p)} &= C(p)x{(t,p)},\label{eq:sys}
\end{aligned}
\end{equation}
with parameter $p \in \mathcal{P} \subseteq \mathbb{R}$, state $x{(t,p)} \in \mathbb{R}^{n}$, input $u(t) \in \mathbb{R}$, output $y{(t,p)} \in \mathbb{R}$, and $A : \mathcal{P} \to \mathbb{R}^{n \times n}$, $B : \mathcal{P} \to \mathbb{R}^{n}$ and $C : \mathcal{P} \to \mathbb{R}^{1 \times n}$ {matrix-valued mappings}. Let $W{(s,p)} \coloneqq C(p)(sI - A(p))^{-1}B(p)$ be the associated transfer function.

In this section, we consider system \eqref{eq:sys} with a fixed value $p = p^*$. Thus, the resulting notion of moment is exactly identical\footnote{Of course, in this case the dependency on $p$ could be omitted. However, we keep this as we use the section to also introduce the parametric notation.} to that given in \cite[Chapter~11]{antoulas2005approximation}.

\begin{definition} \label{def:moments}
    Let $s_i \in \mathbb{C} \backslash \sigma(A(p^*))$. The $0$-moment of system \eqref{eq:sys} at $s = s_i$ is $\eta_0(s_i, p^*) = C(p^*)(s_i I - A(p^*))^{-1}B(p^*)${. The} $k$-moment of system \eqref{eq:sys} at $s = s_i$ is
    \begin{equation}
        \eta_k(s_i, p^*) = \frac{(-1)^k}{k!} \biggl[\frac{d^k}{ds^k} W(s, p^*)\biggr]_{s = s_i}, \label{eq:moment}
    \end{equation}
    with integer $k \geq 1$.  
\end{definition}
 
The moments of system \eqref{eq:sys} at $p = p^*$ can be characterised in terms of the solution of a Sylvester equation \cite{gallivan2004sylvester, gallivan2006model}. Based on this observation, it has been noted that the moments have a one-to-one relationship with the steady-state output response (provided it exists) of system~\eqref{eq:sys} at $p = p^*$ when excited by a certain signal generator~\cite{scarciotti2017nonlinear}, as shown in the following result.

\begin{theorem}[\!\!\cite{astolfi2010model}]\label{thm:moment2steadystate}
    Consider system \eqref{eq:sys}. Suppose $s_i \in \mathbb{C} \backslash \sigma(A(p^*))$ for all $i = 0, \dots, \eta$. Let $S \in \mathbb{R}^{\nu \times \nu}$ be a non-derogatory matrix with the characteristic polynomial ${p(s)} = \prod_{i = 0}^{\eta} (s-s_i)^{k_i}$, where $\nu = \sum_{i = 0}^{\eta} k_i$, and $L \in \mathbb{R}^{1 \times \nu}$ be such that $(S, L)$ is observable. Then, there exists a one-to-one relationship between the moments of {system} \eqref{eq:sys} at the eigenvalues of $S$, namely
    \begin{multline*}
        \eta_0(s_0, p^*), \dots, \eta_{k_0 - 1}(s_0, p^*), \dots, \\ \eta_0(s_\eta, p^*), \dots, \eta_{k_{\eta} - 1}(s_\eta, p^*)
    \end{multline*}
    and
    \begin{enumerate}
        \item the matrix $C(p^*)\Pi(p^*)$, where $\Pi(p^*) \in \mathbb{R}^{n \times \nu}$ is the unique solution of the Sylvester equation
        \begin{equation} 
            A(p^*)\Pi(p^*) + B(p^*)L = \Pi(p^*)S;\label{eq:SylvesterEq}
        \end{equation}
        \item \label{thm:moment2steadystateP2} the steady-state response, provided $\sigma(A(p^*)) \subset \mathbb{C}_{<0}$ and $\sigma(S) \subset \mathbb{C}_0$ simple, of the output $y(t,p^*)$ of system~\eqref{eq:sys} at $p=p^*$ interconnected with the signal generator 
        \begin{equation}\label{eq:SignalGenerator}
            \Dot{\omega}(t) = S\omega(t), \quad u(t) = L\omega(t),
        \end{equation}
        for any $\omega(0)$ such that $(S,\omega(0))$ is excitable\footnote{See \cite[Definition~2]{padoan2017geometric} for the definition of an excitable pair.}. Moreover, the interconnected system has a global invariant manifold described by {$\mathcal{M}(p^*) = \{(x,\omega) : x=\Pi(p^*)\omega\}$}. Hence, for all $t \in \mathbb{R}$,
        \begin{equation}\label{eq:manifold}
            x(t,p^*) = \Pi(p^*)\omega(t) + \epsilon(t,p^*), 
        \end{equation}
        with $\epsilon(t,p^*)=e^{A(p^*)t}(x(0,p^*)-\Pi(p^*)\omega(0))$ an exponentially decaying signal.\hfill $\blacksquare$
    \end{enumerate} 
\end{theorem}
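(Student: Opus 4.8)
The plan is to establish the two assertions in turn: first the purely algebraic correspondence between the moments and the matrix $C(p^*)\Pi(p^*)$, and then the dynamical characterisation via the invariant manifold. Throughout I suppress the dependence on the fixed $p^*$ to lighten the notation. Existence and uniqueness of the $\Pi$ solving \eqref{eq:SylvesterEq} is immediate: since $s_i \in \mathbb{C}\setminus\sigma(A)$ for every $i$ and $\sigma(S)=\{s_0,\dots,s_\eta\}$, the spectra of $A$ and $S$ are disjoint, which is the classical solvability condition for a Sylvester equation. A preliminary computation I would record is that, differentiating $W(s)=C(sI-A)^{-1}B$, the moments of Definition~\ref{def:moments} take the compact form $\eta_k(s_i)=C(s_iI-A)^{-(k+1)}B$.

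For part~(1) I would work in a basis in which $S$ is in Jordan form. Because $S$ is non-derogatory, each $s_i$ corresponds to a single Jordan block of size $k_i$. Writing $\Pi=[\pi_1,\dots,\pi_\nu]$ by columns and $L=[\ell_1,\dots,\ell_\nu]$ by entries, and reading \eqref{eq:SylvesterEq} one column at a time within the block associated with $s_i$, I obtain the recursion $(s_iI-A)\pi_j = B\ell_j - \pi_{j-1}$ with $\pi_0=0$. Solving it and applying $C$ gives $C\pi_j = \sum_{m=0}^{j-1}(-1)^m\eta_m(s_i)\,\ell_{j-m}$, i.e. within each block a signed lower-triangular Toeplitz map sends the moment vector $(\eta_0(s_i),\dots,\eta_{k_i-1}(s_i))$ to $(C\pi_1,\dots,C\pi_{k_i})$, with diagonal coefficient $\ell_1$. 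This map is invertible exactly when $\ell_1\neq 0$, and by the PBH test applied at $s_i$ this is precisely the observability of $(S,L)$ restricted to that block. Since the non-derogatory hypothesis yields one block per eigenvalue, the full map is block lower-triangular and invertible iff $(S,L)$ is observable, which is the claimed one-to-one relationship between the complete list of moments and $C\Pi$.

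For part~(2) the invariance of $\mathcal{M}=\{(x,\omega):x=\Pi\omega\}$ follows by direct substitution: on $\mathcal{M}$ one has $\dot x = \Pi\dot\omega = \Pi S\omega$, while the interconnection of \eqref{eq:sys} with \eqref{eq:SignalGenerator} gives $\dot x = A\Pi\omega + BL\omega$, and these agree by \eqref{eq:SylvesterEq}. To obtain \eqref{eq:manifold} globally, I would introduce the off-manifold coordinate $\epsilon = x-\Pi\omega$ and differentiate; the Sylvester terms cancel, leaving $\dot\epsilon = A\epsilon$, so $\epsilon(t)=e^{At}(x(0)-\Pi\omega(0))$, which decays exponentially because $\sigma(A)\subset\mathbb{C}_{<0}$. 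Hence the steady-state output is $y_{ss}(t)=C\Pi\omega(t)$; since $\sigma(S)\subset\mathbb{C}_0$ is simple, $\omega(t)$ is a bona fide bounded oscillation, and excitability of $(S,\omega(0))$ guarantees that every mode is persistently present, so the time function $C\Pi\omega(t)$ determines $C\Pi$ uniquely. Combined with part~(1), this closes the one-to-one correspondence.

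I expect the main obstacle to be the bookkeeping in part~(1): correctly identifying the Jordan structure under the non-derogatory hypothesis, keeping track of the signs in the Toeplitz recursion, and verifying that the relevant leading entries of $L$ are nonzero so that the assembled block-triangular map is invertible. This is where the non-derogatory and observability assumptions must be used with care. By contrast, the manifold computation in part~(2) is routine linear algebra once the error dynamics $\dot\epsilon=A\epsilon$ have been identified, with the only subtlety being the appeal to excitability to recover $C\Pi$ from the steady-state waveform.
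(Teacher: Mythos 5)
The paper does not actually prove Theorem~\ref{thm:moment2steadystate} --- it is stated as a known result imported from \cite{astolfi2010model} --- so there is no internal proof to compare against; your blind reconstruction is correct and follows essentially the same route as that reference: the column-wise triangular solve of the Sylvester equation \eqref{eq:SylvesterEq} in the Jordan basis of $S$ (with observability via PBH guaranteeing the leading entries $\ell_1 \neq 0$) for part~(1), and the error dynamics $\dot\epsilon = A\epsilon$ obtained from the off-manifold coordinate $\epsilon = x - \Pi\omega$ for part~(2). The only cosmetic imprecisions --- the diagonal entries of your Toeplitz map are $(-1)^{j-1}\ell_1$ rather than $\ell_1$, and the assembled map is block \emph{diagonal} (one block per eigenvalue) rather than merely block lower-triangular --- do not affect the invertibility argument or the conclusion.
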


A family of models that match the moments at $(S,L)$ (that is, models that have the same moments as \eqref{eq:sys} at $\sigma(S)$, as described in Theorem~\ref{thm:moment2steadystate}) is given by
\begin{equation}\label{eq:rom}
    \begin{aligned}
        \Dot{\xi}(t,p^*) &= (S - {G(p^*)}L)\xi(t,p^*) + {G(p^*)}u(t),\\
        \psi(t,p^*) &= C(p^*)\Pi(p^*)\xi(t,p^*),
    \end{aligned}
\end{equation}
and parametrised by any ${G(p^*)}$ such that $\sigma(S - {G(p^*)}L) \cap \sigma(S) = \emptyset$. System~\eqref{eq:rom} is a reduced-order model of system~\eqref{eq:sys} at $(S, L)$ and $p = p^*$ if $\nu < n$.
We refer to~\cite{shakib2023time} for moment matching for multiple-input, multiple-output systems.

\subsection{Notion of Parametric Moment and Moment Matching}\label{sec:paraMomentMatching}
If \eqref{eq:SylvesterEq} has a unique solution for each $p \in \mathcal{P}$, the moments of system \eqref{eq:sys} at $(S, L)$ and $p$ can be characterised in terms of the matrix $C(p)\Pi(p)$ as a function of $p$. Thus, the notion of moment of system \eqref{eq:sys} at a fixed $p$ (as introduced in Section~\ref{sec:momwntnMomentMatching}) can be extended to that of parametric moment on a set $\mathcal{P}$. We formalise first the required assumptions and then introduce the notion of parametric moment.

\begin{assumption}\label{ass:obserAndNoCommonEig}
    The signal generator \eqref{eq:SignalGenerator} is observable and does not share common eigenvalues with system \eqref{eq:sys} for all $p \in \mathcal{P}$.
\end{assumption}

Note that $\Pi(p^*)$ is the unique solution of \eqref{eq:SylvesterEq} if $\sigma(S) \cap \sigma(A(p^*)) = \emptyset$. Thus, under Assumption \ref{ass:obserAndNoCommonEig}, $\sigma(S) \cap \sigma(A(p)) = \emptyset$ for all $p \in \mathcal{P}$ ensures that $\Pi(\cdot)$ exists and is unique on $\mathcal{P}$. Consequently, the parametric moment is well defined.

\begin{definition}\label{def:ParametricMoments}
    We call $\overline{C\Pi} \coloneqq C(\cdot)\Pi(\cdot): \mathcal{P} \to \mathbb{R}^{1 \times \nu}$ the \textit{parametric moment} of system \eqref{eq:sys} at $(S, L)$ on $\mathcal{P}$, where $\Pi: \mathcal{P} \to \mathbb{R}^{n \times \nu}$ is a unique mapping solving \eqref{eq:SylvesterEq} for all $p \in \mathcal{P}$.
\end{definition}

With Definition \ref{def:ParametricMoments}, an associated family of parametric reduced-order models matching the parametric moment of system \eqref{eq:sys} at $(S,L)$ {on} $\mathcal{P}$ is given by
\begin{equation}\label{eq:prom}
    \begin{aligned}
        \Dot{\xi}(t,p) &= (S - {G(p)}L)\xi(t,p) + {G(p)}u(t),\\
        \psi(t,p) &= {\overline{C\Pi}}(p)\xi(t,p),
    \end{aligned}
\end{equation}
with $G: \mathcal{P} \to \mathbb{R}^\nu$ any mapping such that\footnote{This spectrum condition ensures that the Sylvester equation associated to~\eqref{eq:prom} has a unique solution for all $p \in \mathcal{P}$, for the same reasons described above.} $\sigma(S - {G(p)}L) \cap \sigma(S) = \emptyset$ for all $p \in \mathcal{P}$ and $\nu < n$.
\begin{remark}
    The mapping $G(\cdot)$ has $\nu$ free entries which allow enforcing specific properties onto the parametric reduced-order model, such as setting prescribed eigenvalues or zeros, at desired values of $p$, see \cite{astolfi2010model}.
\end{remark}

\begin{remark}
    The matrix $S$ could be constructed in general as a matrix-valued function of $p$, whose eigenvalues vary with different values of $p$, to reduce the approximation error across the parameter space $\mathcal{P}$.
    In this article, we consider $S$ and $L$ to be constant for ease of notation and demonstration. For instance, this simplification has the advantage that only a constant $G(p) = \bar G$ is needed to enforce, for example, the stability of the parametric reduced-order model \eqref{eq:prom}, for all $p \in \mathcal{P}$.
\end{remark}

\subsection{Stability- and Dissipativity-Preserving Reduction}\label{sec:dissipativity}

Models constructed through moment matching preserve the behaviour of the original system near the interpolation points.
In addition, it might also be desired to preserve stability and dissipativity properties of the original system.
While these are fundamental properties of a dynamical system, preserving them is challenging as this needs to be done for all parameter values of interest, namely on $\PP$.
While stability-preserving model reduction techniques for linear parametric systems were presented in, \eg{},~\cite{eid2011stability,baur2011parameter,beattie2019sampling}, not necessarily in the scope of moment matching, the preservation of dissipativity remains an open problem in the literature (whereas for systems without parametric dependency, a stability- and/or dissipativity-preserving model reduction method was presented in~\cite{shakib2025dissipativity}).

The family of models in~\eqref{eq:prom} is parametrised by the mapping $G: \PP \rightarrow \RR^{\nu}$. 
In this section, we address the problem of selecting this mapping such that stability and/or dissipativity of the original system is preserved for the reduced-order model across $\PP$.
We start by defining the notion of $(\QQQ,\SSS,\RRR)$-dissipativity for the parameter-dependent system~\eqref{eq:sys}.
This definition is an extension of the standard $(\QQQ,\SSS,\RRR)$-dissipativity notion for systems without parametric dependency, see~\cite[Definition~4]{kottenstette2014relationships}.

\begin{definition}\label{def:QSR}    
    System~\eqref{eq:sys} is \emph{dissipative} with respect to the energy supply rate $s : \RR \times \RR \times \PP \to \RR : (u,y,p) \mapsto s(u,y,p)$ and for any $p\in\PP$, if there exists a non-negative storage function $V:\RR^{n}\times \PP \rightarrow \RR$, such that for all inputs signal $u:\RR\to\RR$, all trajectories $x: \RR\to\RR^n$, and all $t_2\ge t_1$, the following inequality holds
    \begin{equation}\label{eq:diss_ineq}
        V(x(t_2,p),p) \le V(x(t_1,p),p)  + \int_{t_1}^{t_2} s(u(t),y(t,p),p)\textrm{d}t
    \end{equation}
    for all $p\in\PP$.
    Additionally, the system is $(\QQQ,\SSS,\RRR)$\emph{-dissipative} for all $p\in\PP$ if it is dissipative with respect to the quadratic supply rate
    \begin{equation}\label{eq:quadratic_supply_rate}
        s(u,y,p) = y^\top\QQQ(p) y + 2 u^\top\SSS(p) y + u^\top\RRR(p) u,
    \end{equation}
    for all $p\in \PP$,
    where $\QQQ : \PP\rightarrow \RR$, $\SSS : \PP\rightarrow \RR$, and $\RRR: \PP\rightarrow \RR$.
\end{definition}

Special cases of $(\QQQ,\SSS,\RRR)$-dissipativity include the notions of passivity and $L_2$-gain:
\begin{itemize}
    \item If~\eqref{eq:diss_ineq} with $s$ in~\eqref{eq:quadratic_supply_rate} holds with $\QQQ(p)=0$, $\SSS(p) = \frac{1}{2}$, and $\RRR(p)=0$, then system~\eqref{eq:sys} is \emph{passive} for all $p\in\PP$.
    \item If~\eqref{eq:diss_ineq} holds with $\QQQ(p)=-1$, $\SSS(p) = 0$, and $\RRR(p)=\gamma^2(p)$, then the system is $L_2$-stable from $u$ to $y$ with $L_2$-gain $\gamma: \PP \to \RR_{>0}$ for all $p\in\PP$.
\end{itemize}

We now present conditions for asymptotic stability and $(\QQQ,\SSS,\RRR)$-dissipativity of the parametric system~\eqref{eq:sys}.

\begin{lemma}\label{lem:stab_diss}
The following statements are true.
\begin{itemize}
    \item System~\eqref{eq:sys} is asymptotically stable for all $p\in\PP$ if and only if there exists a matrix-valued function~$\XX: \PP \rightarrow \RR^{n\times n}$ such that
    \begin{equation}\label{eq:stab_LMI}
        A^\top(p)\XX(p) + \XX(p) A(p) \prec 0, \, \XX(p) \succ 0, \, \forall p\in \PP.
    \end{equation}
    \item Given matrix functions $\QQQ: \PP\rightarrow \RR$, $\SSS: \PP\rightarrow \RR$, and $\RRR:\PP\rightarrow \RR$, system~\eqref{eq:sys} is $(\QQQ,\SSS,\RRR)$-dissipative for all $p\in\PP$ if and only if there exists a matrix-valued function~$\XX: \PP \rightarrow \RR^{n\times n}$, such that
    \begin{multline}\label{eq:diss_LMI}
        \begin{bmatrix}
            A^\top(p) \XX(p) + \XX(p) A(p) & \star  \\
            B^\top(p) \XX(p) & 0 \\
        \end{bmatrix}\\
        - \begin{bmatrix}
            C^\top(p) \QQQ(p) C(p) & \star \\ \SSS(p)C(p) & \RRR(p)
        \end{bmatrix}
        \preceq 0, \\
        \XX(p) = \XX^\top(p) \succ 0, \, \forall p\in \PP.
    \end{multline}
\end{itemize}
\end{lemma}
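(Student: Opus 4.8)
The plan is to recognise that both statements in Lemma~\ref{lem:stab_diss} are \emph{pointwise} in $p$: the dynamical properties (asymptotic stability and $(\QQQ,\SSS,\RRR)$-dissipativity) are required to hold for every $p\in\PP$, and so are the matrix inequalities~\eqref{eq:stab_LMI} and~\eqref{eq:diss_LMI}. Hence it suffices to establish, for each \emph{fixed} $p\in\PP$, the equivalence between the dynamical property of the (now non-parametric) system~\eqref{eq:sys} and feasibility of the corresponding LMI in the single unknown matrix $\XX(p)$. Collecting the resulting matrices $\XX(p)$ over $p\in\PP$ then defines the matrix-valued function $\XX:\PP\to\RR^{n\times n}$, with no regularity requirement imposed. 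In this way the parametric claim reduces entirely to the classical non-parametric Lyapunov and dissipativity characterisations, applied pointwise.

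For the first statement, I would fix $p$ and note that asymptotic stability of $\dot{x}=A(p)x$ is equivalent to $A(p)$ being Hurwitz. The classical Lyapunov theorem then gives the equivalence with the existence of $\XX(p)=\XX^\top(p)\succ 0$ satisfying $A^\top(p)\XX(p)+\XX(p)A(p)\prec 0$, which is exactly~\eqref{eq:stab_LMI} evaluated at $p$.

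For the second statement, again fix $p$ and take the quadratic candidate storage function $V(x,p)=x^\top\XX(p)x$. Differentiating $V$ along trajectories of~\eqref{eq:sys} and substituting the quadratic supply rate~\eqref{eq:quadratic_supply_rate}, the difference $\dot{V}-s(u,y,p)$ becomes a quadratic form in $(x,u)$ whose matrix is precisely the left-hand side of~\eqref{eq:diss_LMI}. Thus~\eqref{eq:diss_LMI} together with $\XX(p)\succ 0$ is equivalent to $\dot{V}\le s(u,y,p)$ for all $(x,u)$; integrating this differential inequality from $t_1$ to $t_2$ yields the integral dissipation inequality~\eqref{eq:diss_ineq} with a non-negative $V$, which proves sufficiency. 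For necessity I would invoke the standard result of linear dissipativity theory, namely that a linear system dissipative with respect to a quadratic supply rate admits a quadratic storage function certifying~\eqref{eq:diss_ineq}, so that the LMI~\eqref{eq:diss_LMI} is feasible at $p$.

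The main obstacle is this necessity direction of the dissipative case: the storage function in Definition~\ref{def:QSR} is only required to be non-negative, not quadratic, so one must argue that dissipativity forces the existence of a quadratic storage function of the form $x^\top\XX(p)x$ satisfying~\eqref{eq:diss_LMI}. This is the content of the classical Kalman--Yakubovich--Popov argument (\eg, via the available storage), and the only additional care is in recovering the strict inequality $\XX(p)\succ 0$ rather than mere positive semidefiniteness. Everything else is routine parametric bookkeeping of non-parametric facts, carried out uniformly in $p$.
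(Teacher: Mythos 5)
Your proposal is correct and takes essentially the same approach as the paper: the paper's proof likewise reduces both claims to fixed-$p$ statements and then invokes the classical non-parametric equivalences, citing \cite[Theorem~5.19]{antoulas2005approximation} for the Lyapunov/Hurwitz characterisation and \cite[Lemma~2]{kottenstette2014relationships} for the $(\QQQ,\SSS,\RRR)$-dissipativity LMI. The only difference is that you sketch the internals of these cited results (the quadratic storage function computation and the KYP/available-storage argument for necessity), whereas the paper simply cites them.
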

\begin{proof}
The proof of the first statement follows from~\cite[Theorem $5.19$]{antoulas2005approximation}, which states for a fixed $p\in\mathcal{P}$ that $A(p)$ is a Hurwitz matrix if and only if the inequality~\eqref{eq:stab_LMI} holds.
The proof of the second statement follows from \cite[Lemma~2]{kottenstette2014relationships}, which for a fixed $p \in \mathcal{P}$ states that the system is $(\QQQ,\SSS,\RRR)$-dissipative if and only if the inequality~\eqref{eq:diss_LMI} holds.
\end{proof}
\medskip

Given that a matrix-valued function $\XX(\cdot)$ is found, either for stability or for dissipativity, we now present a selection of the mapping $G(\cdot)$ that preserves the corresponding property.
We summarise the required assumptions first.

\begin{assumption}\label{ass:stab}
    System~\eqref{eq:sys} is minimal, asymptotically stable, and $\XX:\PP \rightarrow \RR^{n\times n}$ is a solution to~\eqref{eq:stab_LMI}.
\end{assumption}

\begin{assumption}\label{ass:dissipativity}
    System~\eqref{eq:sys} is minimal, $(\QQQ,\SSS,\RRR)$-dissipative with a given $\QQQ : \PP\rightarrow \RR$, $\SSS : \PP\rightarrow \RR$, and $\RRR: \PP\rightarrow \RR$, and $\XX:\PP \rightarrow \RR^{n\times n}$ is a solution to~\eqref{eq:diss_LMI}.
\end{assumption}

The main result of this section for stability- and dissipativity-preserving model reduction is presented next.

\begin{theorem}\label{thm:dissipativity}
    Consider system~\eqref{eq:sys} and matrices $(S,L)$ that satisfy Assumption~\ref{ass:obserAndNoCommonEig}.
    Suppose that $\nu \le n$.
    Consider the reduced-order model~\eqref{eq:prom} with
    \begin{equation}\label{eq:def_G}
        G(p) = \bigl(\Pi(p)^\top \XX(p) \Pi(p)\bigr)^{-1} \Pi^\top(p) \XX(p) B(p),
    \end{equation}
    where the mapping $\Pi(\cdot)$ is the unique solution to~\eqref{eq:SylvesterEq} on $\mathcal{P}$.
    The following statements are true.
    \begin{itemize}
        \item If Assumption~\ref{ass:stab} is satisfied, then the model~\eqref{eq:prom}-\eqref{eq:def_G} is asymptotically stable for all $p\in \PP$.
        \item If Assumption~\ref{ass:dissipativity} is satisfied, then the model~\eqref{eq:prom}-\eqref{eq:def_G} is $(\QQQ,\SSS,\RRR)$-dissipative for all $p\in \PP$ with the same $(\QQQ,\SSS,\RRR)$ as given in Assumption~\ref{ass:dissipativity}.
    \end{itemize}
    In addition, for any $p\in\PP$ such that the condition $\sigma(S)\cap\sigma(S-G(p)L)=\emptyset$ is satisfied, model~\eqref{eq:prom}-\eqref{eq:def_G} also achieves moment matching at $(S,L)$.
\end{theorem}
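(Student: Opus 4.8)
The plan is to recognise that the reduced-order model~\eqref{eq:prom}-\eqref{eq:def_G} is an oblique (Petrov--Galerkin) projection of~\eqref{eq:sys} with a weighting chosen so that the dissipativity certificate $\XX(\cdot)$ transports to the reduced state space by congruence. First I would record that minimality of~\eqref{eq:sys} together with observability of $(S,L)$ (Assumption~\ref{ass:obserAndNoCommonEig}) forces $\Pi(p)$ to have full column rank $\nu$ for every $p\in\PP$; since $\XX(p)\succ 0$, this makes $\Pi(p)^\top\XX(p)\Pi(p)$ invertible (so $G(p)$ in~\eqref{eq:def_G} is well defined) and makes $\XX_r(p):=\Pi(p)^\top\XX(p)\Pi(p)\succ 0$ the natural candidate certificate for the reduced model.

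The key algebraic step is to rewrite the reduced system matrices in projected form. Setting $W(p):=\XX(p)\Pi(p)\bigl(\Pi(p)^\top\XX(p)\Pi(p)\bigr)^{-1}$, I would verify directly from~\eqref{eq:def_G} the two identities $W^\top\Pi=I_\nu$ and $G=W^\top B$. Substituting $BL=\Pi S-A\Pi$ from the Sylvester equation~\eqref{eq:SylvesterEq} into $S-G L=S-W^\top B L$ then yields $S-G(p)L=W(p)^\top A(p)\Pi(p)$, so the reduced triple is exactly $(W^\top A\Pi,\,W^\top B,\,C\Pi)$ with $W^\top\Pi=I_\nu$. A short computation (the cross terms cancelling by construction of $G$) then gives $\XX_r A_r=\Pi^\top\XX A\Pi$, whence $A_r^\top\XX_r+\XX_r A_r=\Pi^\top\bigl(A^\top\XX+\XX A\bigr)\Pi$.

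For the stability claim I would invoke Assumption~\ref{ass:stab}: since $A^\top\XX+\XX A\prec 0$ and $\Pi$ has full column rank, the congruence $\Pi^\top(\cdot)\Pi$ preserves the strict inequality, so $A_r^\top\XX_r+\XX_r A_r\prec 0$ with $\XX_r\succ 0$, and asymptotic stability follows from the first statement of Lemma~\ref{lem:stab_diss}. For the dissipativity claim I would exhibit the block congruence $T=\diag(\Pi,I)$: using $C_r=C\Pi$, the identity $\XX_r A_r=\Pi^\top\XX A\Pi$, and $B_r^\top\XX_r=B^\top\XX\Pi$, each block of the reduced dissipativity matrix matches the corresponding block of $T^\top M T$, where $M\preceq 0$ is the original matrix in~\eqref{eq:diss_LMI} under Assumption~\ref{ass:dissipativity}. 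Since congruence preserves $\preceq 0$ and $\XX_r\succ 0$, the reduced LMI~\eqref{eq:diss_LMI} holds with the same $(\QQQ,\SSS,\RRR)$, and the second statement of Lemma~\ref{lem:stab_diss} closes the case.

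Finally, the moment-matching claim is essentially by construction: model~\eqref{eq:prom} is a member of the moment-matching family recalled in Section~\ref{sec:momwntnMomentMatching}, and whenever $\sigma(S)\cap\sigma(S-G(p)L)=\emptyset$ the reduced Sylvester equation $(S-GL)P+GL=PS$ admits the unique solution $P=I_\nu$, so the reduced parametric moment $\overline{C\Pi}(p)P=\overline{C\Pi}(p)$ coincides with that of~\eqref{eq:sys}. I expect the only genuine obstacle to be the uniform full-column-rank property of $\Pi(p)$ across $\PP$: well-posedness of $G(p)$, positive definiteness of $\XX_r(p)$, and the strict inequality in the stability case all hinge on it, whereas the remainder reduces to the cancellation engineered by the choice of $G$ and to the congruence invariance of (semi)definiteness.
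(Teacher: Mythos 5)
Your proposal is correct and follows essentially the same route as the paper's proof: the candidate certificate $\XX_r(p)=\Pi^\top(p)\XX(p)\Pi(p)$, the Sylvester-equation substitution yielding $\XX_r(p)(S-G(p)L)=\Pi^\top(p)\XX(p)A(p)\Pi(p)$ and $\XX_r(p)G(p)=\Pi^\top(p)\XX(p)B(p)$, and the block congruence with $\diag(\Pi(p),1)$ that transports the original LMI of Lemma~\ref{lem:stab_diss} to the reduced one are exactly the paper's steps, with your Petrov--Galerkin framing via $W(p)=\XX(p)\Pi(p)\bigl(\Pi^\top(p)\XX(p)\Pi(p)\bigr)^{-1}$ being only a notational repackaging. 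The minor differences are that you spell out the stability case (which the paper omits ``for brevity'') and make the moment-matching step explicit via the reduced Sylvester solution $P=I_\nu$, where the paper simply cites the earlier moment-matching result.
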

\begin{proof}
    We only present the proof for dissipativity and moment matching. The proof for asymptotic stability follows a similar line of reasoning and is omitted here for brevity.

    The matrix inequalities~\eqref{eq:diss_LMI} written for the reduced-order model~\eqref{eq:prom} read as
    \begin{multline}\label{eq:proof:LMI_ROM}
        \begin{bmatrix}
            (S-G(p)L)^\top \tilde \XX(p) + \tilde \XX(p) (S-G(p)L) & \star  \\
            G^\top(p) \tilde \XX(p)  & 0 \\
        \end{bmatrix} \\
        - \begin{bmatrix}
            \Pi^\top(p) C^\top(p) \QQQ(p) C(p) \Pi(p) & \star\\ \SSS(p) C(p) \Pi(p) & \RRR(p)
        \end{bmatrix}            
            \preceq 0, \\
        \tilde \XX(p) = \tilde \XX^\top(p) \succ 0, \, \forall p\in \PP,
    \end{multline}
    for some matrix-valued function $\tilde \XX : \PP \rightarrow \RR^{\nu\times\nu}$.
    We show that the selection
    \begin{equation}\label{eq:proof:tildeX}
        \tilde \XX(p) = \Pi^\top(p) \XX(p) \Pi(p)
    \end{equation}
    results in the satisfaction of the matrix inequality~\eqref{eq:proof:LMI_ROM} with the mapping $G(\cdot)$ as in~\eqref{eq:def_G}.

    First, note that $\tilde \XX(p)$ is a positive-definite matrix for any $p\in\PP$ since $\Pi(p)$ is a full column-rank solution of the Sylvester equation~\eqref{eq:SylvesterEq}, \ie, rank$(\Pi(p))=\nu$ for all $p\in\PP$.
    The latter is true by the observability of $(S,L)$, controllability of $(A(\cdot),B(\cdot))$ on $\PP$, and the fact that $\nu \le n$, see~\cite{de1981controllability} for details.

    Using $\tilde \XX(\cdot)$ as in~\eqref{eq:proof:tildeX}, the matrix-valued function $G(\cdot)$ in~\eqref{eq:def_G} can be written as
    \begin{equation}\label{eq:def_G_alternative}
        G(p) = \tilde \XX^{-1}(p) \Pi^\top(p) \XX(p) B(p)
    \end{equation} 
    for all $p\in\PP$.
    By direct computation, we find the following identities for all $p\in\PP$:
    \begin{subequations}\label{eq:proof:identity_F}
        \begin{align}
            \tilde \XX(p) (S-&G(p)L) \\
            &= \tilde \XX(p) (S-\tilde \XX^{-1}(p) \Pi^\top(p) \XX(p) B(p) L) \\
            &= \Pi^\top(p) \XX(p) \Pi(p) S - \Pi^\top(p) \XX(p) B(p)L \\
            &= \Pi^\top(p) \XX(p) A(p) \Pi,\label{eq:proof:use_Sylvester}
        \end{align}
    \end{subequations}
    where, to arrive at~\eqref{eq:proof:use_Sylvester}, the Sylvester equation~\eqref{eq:SylvesterEq} has been used.
    Furthermore, the identity
    \begin{equation}\label{eq:proof:identity_GH}
            \tilde \XX(p) G(p) = \Pi^\top(p) \XX(p) B(p),
    \end{equation}
    can be found directly from~\eqref{eq:def_G_alternative} for all $p\in\PP$.
    Using the identities~\eqref{eq:proof:identity_F} and~\eqref{eq:proof:identity_GH} in~\eqref{eq:proof:LMI_ROM}, the inequality~\eqref{eq:proof:LMI_ROM} can be written as
    \begin{multline}\label{eq:proof:LMI_red}
        \begin{bmatrix}
                \Pi^\top(p) & 0\\
                0 & 1
            \end{bmatrix} 
            \mc K(p) 
            \underbrace{\begin{bmatrix}
                \Pi(p) & 0 \\
                0 & 1
            \end{bmatrix}}_{\eqqcolon \mc T(p)}
            \preceq 0, \\
            \tilde \XX(p) =\tilde \XX^\top(p) \succ 0, \quad \forall p\in\PP,
    \end{multline}
    where $\mc K(\cdot)$ is defined as
    \begin{multline*}
        \mc K(p) \coloneqq \begin{bmatrix}
            A^\top(p) \XX(p) + \XX(p) A(p) & \star  \\
            B^\top(p) \XX(p) & 0 \\
        \end{bmatrix}\\
        - \begin{bmatrix}
            C^\top(p) \QQQ(p) C(p) & \star \\ \SSS(p)C(p) & \RRR(p)
        \end{bmatrix}.
    \end{multline*}
    Now, notice that by Assumption~\ref{ass:dissipativity}, the matrix $\mc K(p)$ satisfies $\mc K(p)\preceq 0$ for all $p\in\PP$.
    Hence, $\mc T^\top(p)\mc K(p)\mc T(p)\preceq 0$ is satisfied for all $p\in\PP$.
    This, together with the fact that $\tilde \XX(\cdot) \succ0$ on $\PP$ implies that~\eqref{eq:proof:LMI_ROM} is satisfied.
    In conclusion, $(\QQQ,\SSS,\RRR)$-dissipativity is preserved for the same $(\QQQ,\SSS,\RRR)$.
    
    Finally, for a fixed $p\in\PP$, it follows from~\cite{astolfi2010model} that moment matching is achieved if in addition $\sigma(S)\cap\sigma(S-G(p)L)=\emptyset$ is satisfied, which completes this proof.
\end{proof}

The result of Theorem~\ref{thm:dissipativity} can be used to construct a stability- or dissipativity-preserving reduced-order model.
First, a matrix-valued function $\XX(\cdot)$ has to be found that satisfies the linear matrix inequalities in Lemma~\ref{lem:stab_diss}, either for stability or dissipativity, depending on the targeted property.
Then, the reduced-order model~\eqref{eq:prom}-\eqref{eq:def_G} that explicitly depends on this $\XX(\cdot)$ preserves the targeted property.

\begin{remark}
By the observability of the pair $(S,L)$ (Assumption~\ref{ass:obserAndNoCommonEig}), a trivial choice of $G(\cdot)$ that results in stability preservation and moment matching is to take $G(p)=\bar G$ with $\bar G\in\mathbb{R}^\nu$ such that $\sigma(S - \bar GL) \cap \sigma(S) = \emptyset$ and $\sigma(S-\bar GL) \subset \mathbb{C}_{<0}$.
Although this trivially enforces asymptotic stability across $\mathcal{P}$, the parametric reduced-order model loses structure in the sense that, \eg, $A(p)$ and $B(p)$ can change with the parameter $p$, while their reduced-order counterparts $S-\bar GL$ and $\bar G$ are then both independent of $p$.
Consequently, this may lead to not capturing the qualitative behaviour of the original system when $p$ varies. 
In contrast, the proposed mapping $G(\cdot)$ in~\eqref{eq:def_G} results in a reduced-order model~\eqref{eq:prom} with a structure that is consistent with the original system~\eqref{eq:sys} in the sense that all its matrices may vary with $p$.    
\end{remark}


\section{Methods to Achieve Parametric Moment Matching for Linear Parametric Systems}\label{sec:MMforLinearParametricSys}
We note from \eqref{eq:prom} that the parametric moment ${\overline{C\Pi}}(\cdot)$ is an instrumental part to be determined to obtain a parametric reduced-order model since the matrix $S$ is prescribed and the mapping $G(\cdot)$ can be selected \textit{a posteriori} to enforce additional properties. In this section, we propose two approaches to approximate the parametric moment ${\overline{C\Pi}}(\cdot)$.
First, in Section~\ref{sec:MatrixBased}, we propose a series expansion to approximate $\overline{C\Pi}(\cdot)$.
Subsequently, in Section~\ref{sec:dataDriven}  we present a second approach to estimate $\overline{C\Pi}(\cdot)$ using basis functions.
In Section~\ref{sec:linearDataDriven}, we provide a data-driven enhancement of the basis function method.  Finally, all the results of this section are illustrated in Section~\ref{sec:linearExample}.

\subsection{Parametric Moment via Series Expansion} \label{sec:MatrixBased}
Consider system \eqref{eq:sys} and suppose (as in, \eg, \cite{benner2015survey}) that $A(\cdot)$ depends on $p$ with a known affine form, namely
\begin{equation} 
    A(p) = A_0 + \sum_{i = 1}^{N_a} f_i^a(p) A_i,\label{eq:Ap}
\end{equation}
where $f_i^a : \mathcal{P} \to \mathbb{R}$ are possibly nonlinear functions that indicate the way $p$ interferes with system \eqref{eq:sys} and $A_i \in \mathbb{R}^{n \times n}$, for $i = 1, \dots , N_a$. Similarly, consider that $B(\cdot)$ and $C(\cdot)$ are described as
\begin{equation} 
    B(p) = B_0 + \sum_{i = 1}^{N_b} f_i^b(p) B_i, \quad C(p) = C_0 + \sum_{i = 1}^{N_c} f_i^c(p) C_i,  \label{eq:BpCp}
\end{equation}
where $f_i^b: \mathcal{P} \to \mathbb{R}$, $f_i^c: \mathcal{P} \to \mathbb{R}$ are possibly nonlinear functions, and $B_i \in \mathbb{R}^{n}$ and $C_i \in \mathbb{R}^{1 \times n}$ are constant matrices, for $i = 1, \dots , N_b$ and $i = 1, \dots , N_c$, respectively.

\begin{assumption}\label{ass:analyticFun}
    The functions $f_i^a(\cdot)$, $f_i^b(\cdot)$, and $f_i^c(\cdot)$ are analytic on $\mathcal{P}$.
\end{assumption}

Under Assumption \ref{ass:analyticFun}, $A(\cdot)$ can be described by using a convergent Taylor series of $f_i^a(\cdot)$ expanded at the origin (without loss of generality), which can be approximated by truncating the higher-order terms
\begin{equation}
\begin{aligned}
    A(p) &= A_0 + \sum_{i = 1}^{N_a} \sum_{j = 0}^{\infty} a_j^i p^j A_i = \sum_{j = 0}^{\infty} p^j \widehat{A}_j \\
    &= \sum_{j = 0}^{M_a} p^j \widehat{A}_j + \mathcal{O}(p^{M_a + 1}),
    \label{eq:ApTaylor}
\end{aligned}
\end{equation}
with $M_a$ the highest retained order, $a_j^i$ the $j$-th Taylor series coefficient of $f_i^a(\cdot)$, and $\widehat{A}_j$ the coefficient of the power series of $A(\cdot)$. Similarly, $B(\cdot)$ and $C(\cdot)$ can be described by
\begin{equation}
    B(p) = B_0 + \sum_{i = 1}^{N_b} \sum_{j = 0}^{\infty} b_j^i p^j B_i = \sum_{j = 0}^{M_b} p^j \widehat{B}_j + \mathcal{O}(p^{M_b + 1}),
    \label{eq:BpTaylor}
\end{equation}
\begin{equation}
    C(p) = C_0 + \sum_{i = 1}^{N_c} \sum_{j = 0}^{\infty} c_j^i p^j C_i = \sum_{j = 0}^{M_c} p^j \widehat{C}_j + \mathcal{O}(p^{M_c + 1}),
    \label{eq:CpTaylor}
\end{equation}
where $M_b$ and $M_c$ denote the highest retained orders for $B(\cdot)$ and $C(\cdot)$, $b_j^i$ and $c_j^i$ are the $j$-th Taylor series coefficients of $f_i^b(\cdot)$ and $f_i^c(\cdot)$, and $\widehat{B}_j$ and $\widehat{C}_j$ are the coefficients of the power series of $B(\cdot)$ and $C(\cdot)$, respectively.

Exploiting the assumptions that we have introduced, we provide a method to compute the parametric moment.

\begin{theorem}\label{thm:approximatePi}
    Consider system~\eqref{eq:sys} and the signal generator~\eqref{eq:SignalGenerator}. Suppose Assumptions~\ref{ass:obserAndNoCommonEig} and~\ref{ass:analyticFun} hold. Then the parametric moment of system~\eqref{eq:sys} at $(S,L)$ on $\mathcal{P}$ is
    \begin{equation}
        \overline{C\Pi}(p) = C(p) \Bigl( \sum_{j = 0}^{\infty} p^j {\Pi}_j \Bigr),  \label{eq:PiTaylor}
    \end{equation}
     with ${\Pi}_j$, $j \geq 0$, the solutions of the nested {Sylvester} equations
    \begin{equation}\label{eq:nestedSylvesterEq}
        \begin{aligned}
            {\Pi}_0 S &= \widehat{A}_0 {\Pi}_0 + \widehat{B}_0 L, \\
            {\Pi}_1 S &= \widehat{A}_0 {\Pi}_1 + \widehat{A}_1 {\Pi}_0 + \widehat{B}_1 L, \\
            &\vdots \\
            {\Pi}_j S &= \sum_{k = 0}^j \widehat{A}_k {\Pi}_{j - k} + \widehat{B}_j L, 
        \end{aligned}
    \end{equation}
    for $j \geq 2$ integer.
\end{theorem}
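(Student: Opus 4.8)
The plan is to show that the unique Sylvester solution $\Pi(\cdot)$ is analytic on $\mathcal{P}$, expand it as a convergent power series, and then match powers of $p$ in the Sylvester equation~\eqref{eq:SylvesterEq} to recover the nested system~\eqref{eq:nestedSylvesterEq}. First I would establish analyticity of $\Pi(\cdot)$. Vectorising~\eqref{eq:SylvesterEq}, rewritten as $A(p)\Pi(p) - \Pi(p)S = -B(p)L$, gives
\[
\bigl(I_\nu\otimes A(p) - S^\top\otimes I_n\bigr)\VEC(\Pi(p)) = -\VEC(B(p)L).
\]
Under Assumption~\ref{ass:obserAndNoCommonEig}, $\sigma(S)\cap\sigma(A(p))=\emptyset$ for all $p\in\mathcal{P}$, so the coefficient matrix $M(p)\coloneqq I_\nu\otimes A(p) - S^\top\otimes I_n$ is invertible on $\mathcal{P}$, yielding $\VEC(\Pi(p)) = -M(p)^{-1}\VEC(B(p)L)$.

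By Assumption~\ref{ass:analyticFun} the entries of $A(\cdot)$ and $B(\cdot)$ are analytic, hence $M(\cdot)$ and $B(\cdot)L$ are analytic; since $M(p)$ is invertible with nonvanishing determinant, Cramer's rule (adjugate over determinant) shows that $M(\cdot)^{-1}$ is analytic, and therefore so is $\VEC(\Pi(\cdot))$. Consequently $\Pi(\cdot)$ admits a convergent power series $\Pi(p)=\sum_{j=0}^{\infty}p^j\Pi_j$ in a neighbourhood of the expansion point. Next I would substitute this series, together with~\eqref{eq:ApTaylor} and~\eqref{eq:BpTaylor}, into~\eqref{eq:SylvesterEq}. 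Using the Cauchy product, the left-hand side $A(p)\Pi(p)+B(p)L$ becomes $\sum_{j=0}^{\infty}p^j\bigl(\sum_{k=0}^{j}\widehat{A}_k\Pi_{j-k}+\widehat{B}_jL\bigr)$, while the right-hand side $\Pi(p)S$ becomes $\sum_{j=0}^{\infty}p^j\,\Pi_j S$. Since both sides are convergent power series that coincide near the origin, uniqueness of power-series coefficients forces $\Pi_j S = \sum_{k=0}^{j}\widehat{A}_k\Pi_{j-k}+\widehat{B}_jL$ for every $j\ge 0$, which is exactly~\eqref{eq:nestedSylvesterEq}. The claimed identity~\eqref{eq:PiTaylor} then follows immediately from $\overline{C\Pi}(p)=C(p)\Pi(p)$.

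I expect the main obstacle to be the rigorous justification that $\Pi(\cdot)$ is genuinely analytic, so that the power series actually converges to $\Pi(p)$ rather than being a merely formal object; this is precisely why I route the argument through vectorisation and the analyticity of the inverse of the invertible analytic matrix $M(\cdot)$, rather than differentiating~\eqref{eq:SylvesterEq} term by term at the outset. A secondary point worth recording is that each coefficient $\Pi_j$ is itself well defined: rewriting the $j$-th equation in~\eqref{eq:nestedSylvesterEq} as $\widehat{A}_0\Pi_j - \Pi_j S = -\sum_{k=1}^{j}\widehat{A}_k\Pi_{j-k}-\widehat{B}_jL$, unique solvability for $\Pi_j$ follows because $\widehat{A}_0 = A(0)$ and $S$ share no eigenvalue, again a consequence of Assumption~\ref{ass:obserAndNoCommonEig} evaluated at the expansion point. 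This recursive structure confirms that the nested equations determine the $\Pi_j$ one at a time from $\Pi_0,\dots,\Pi_{j-1}$.
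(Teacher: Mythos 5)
Your proposal is correct and follows essentially the same route as the paper's proof: vectorisation of the Sylvester equation via the Kronecker product, analyticity of $\Pi(\cdot)$ under Assumptions~\ref{ass:obserAndNoCommonEig} and~\ref{ass:analyticFun}, substitution of the power series, and matching coefficients of like powers of $p$ to obtain~\eqref{eq:nestedSylvesterEq}. Your two additions---justifying analyticity of $M(\cdot)^{-1}$ via Cramer's rule rather than a blanket appeal to closure properties of analytic functions, and noting that each $\Pi_j$ is uniquely determined because $\widehat{A}_0$ and $S$ share no eigenvalues---are welcome refinements of steps the paper leaves implicit, not a different argument.
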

\begin{proof}
    It follows from Assumption \ref{ass:obserAndNoCommonEig} that the unique mapping $\Pi(\cdot)$ is well defined on $\mathcal{P}$ which solves the parametric Sylvester equation
    \begin{equation}\label{eq:ParaSlvesterEq}
        A(p)\Pi(p) + B(p)L = \Pi(p)S
    \end{equation}
    for all $p \in \mathcal{P}$. By applying a property of the Kronecker product\footnote{$\VEC(AXB)=(B^\top \otimes A)\VEC(X)$, with $A$, $X$, and $B$ of compatible dimensions. \label{footnote:kronProperty}}, the solution to \eqref{eq:ParaSlvesterEq} can be obtained as
    \begin{equation}\label{eq:solutionSylvesterEq}
        \VEC(\Pi(p)) = -(I_\nu \otimes A(p) - S^\top \otimes I_n)^{-1} \VEC(B(p)L),
    \end{equation}
    which, together with Assumption \ref{ass:analyticFun}, implies that the mapping $\Pi(\cdot)$ is also analytic on $\mathcal{P}$ as the sums, products, and reciprocals of analytic functions remain analytic \cite{krantz2002primer}. It follows that $\Pi(\cdot)$ can be formally represented by a convergent power series, namely ${\Pi}(p) = \sum_{j = 0}^{\infty}p^j{\Pi}_j$, locally defined on $\mathcal{P}$, solving~\eqref{eq:ParaSlvesterEq}. Since Assumption \ref{ass:analyticFun} holds, we substitute $A(\cdot)$ and $B(\cdot)$ in \eqref{eq:ParaSlvesterEq} with their power series. Then we have
    \begin{multline}\label{eq:appSylvester}
         \Bigl(p^0\widehat{A}_0 + p^1\widehat{A}_1 + \sum_{j = 2}^{\infty} p^j \widehat{A}_j\Bigr)
        \Bigl(p^0{\Pi}_0 + p^1{\Pi}_1 + \sum_{j = 2}^{\infty} p^j {\Pi}_j\Bigr) \\+ \Bigl(p^0\widehat{B}_0 + p^1\widehat{B}_1 + \sum_{j = 2}^{\infty} p^j \widehat{B}_j\Bigr)L \\= \Bigl(p^0{\Pi}_0 + p^1{\Pi}_1 + \sum_{j = 2}^{\infty} p^j {\Pi}_j\Bigr)S,
    \end{multline}
    and we obtain \eqref{eq:nestedSylvesterEq} by matching the terms of the same order in $p$ on both sides of \eqref{eq:appSylvester}. Finally, by Assumption \ref{ass:obserAndNoCommonEig}, which states that $(S, L)$ is observable, and applying \cite[Lemma 3]{astolfi2010model} for each fixed $p$, the obtained $\overline{C\Pi}(p)$ is in one-to-one relation with the moments as defined in Definition \ref{def:moments} for all $p \in \mathcal{P}$.
\end{proof}

\begin{remark}
    Equation \eqref{eq:solutionSylvesterEq} provides an analytic solution to obtain the exact $\Pi(\cdot)$ without the need to solve the nested Sylvester equations \eqref{eq:nestedSylvesterEq}. However, solving \eqref{eq:solutionSylvesterEq} symbolically is often hindered by the so-called ``expression swell'' problem~\cite{ExpressionSwell}, especially for computations involving high-dimensional elements, which may render this procedure computationally intractable. The value of Theorem \ref{thm:approximatePi} is that it provides a way to solve \eqref{eq:ParaSlvesterEq} numerically, obtaining an approximation of the parametric moment $\overline{C\Pi}(\cdot)$ by truncating the series \eqref{eq:PiTaylor}.
\end{remark}

From a practical point of view, we define an approximate version of the parametric moment by using a finite number of terms ${\Pi}_j$, $j = 0, \dots, N \leq \min \{M_a, M_b\}$ in \eqref{eq:PiTaylor}.

\begin{definition}\label{def:appParaMM}
    We call $\widehat{C\Pi}_N(p) \coloneqq C(p)\widehat{\Pi}_N(p)$, with $\widehat{\Pi}_N(p) \coloneqq \sum_{j = 0}^{N-1} p^j {\Pi}_j$ and $\Pi_j$ computed by Theorem \ref{thm:approximatePi}, the \textit{$N$-th approximate parametric moment} (via series expansion) of system \eqref{eq:sys} at $(S,L)$ on $\mathcal{P}$.
\end{definition}

Definition \ref{def:appParaMM} provides an approximation of the parametric moment $\overline{C\Pi}(\cdot)$ within a neighbourhood of $p = 0$. The associated approximation error with respect to $\overline{C\Pi}(\cdot)$ is exactly quantified by the residual $C(p)\sum_{j = N}^\infty p^j \Pi_j$.

\begin{corollary} \label{thm:convergeTaylorSeries}
    Suppose the assumptions in Theorem \ref{thm:approximatePi} hold. Then $\lim_{N \to \infty} \widehat{C\Pi}_N(p) = \overline{C\Pi}(p)$.
\end{corollary}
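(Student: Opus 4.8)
The plan is to obtain this corollary as an immediate consequence of the analyticity of $\Pi(\cdot)$ already established in the proof of Theorem~\ref{thm:approximatePi}. The key observation is that, once $\Pi(\cdot)$ is known to be analytic on $\mathcal{P}$, its power series $\sum_{j=0}^{\infty} p^j \Pi_j$ converges to $\Pi(p)$ on a neighbourhood of the expansion point $p=0$, and the partial sums $\widehat{\Pi}_N(p)=\sum_{j=0}^{N-1} p^j \Pi_j$ are exactly the objects defining the approximate parametric moment in Definition~\ref{def:appParaMM}. Hence the whole argument reduces to transporting the convergence $\widehat{\Pi}_N(p)\to\Pi(p)$ through the fixed (for each $p$) left factor $C(p)$.

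Concretely, first I would invoke Theorem~\ref{thm:approximatePi}, which guarantees that $\Pi(\cdot)$ admits the convergent representation $\Pi(p)=\sum_{j=0}^{\infty} p^j \Pi_j$, with the coefficients $\Pi_j$ obtained from the nested Sylvester equations~\eqref{eq:nestedSylvesterEq}. By the very definition of convergence of this series, for every $p$ inside its radius of convergence the partial sums satisfy $\lim_{N\to\infty}\widehat{\Pi}_N(p)=\Pi(p)$ entrywise. Second, I would fix such a $p$ and note that $C(p)\in\mathbb{R}^{1\times n}$ is a constant matrix; since left multiplication by a fixed matrix is a continuous (indeed linear) map, it commutes with the limit, so that
\begin{equation*}
    \lim_{N\to\infty}\widehat{C\Pi}_N(p) = C(p)\lim_{N\to\infty}\widehat{\Pi}_N(p) = C(p)\Pi(p) = \overline{C\Pi}(p),
\end{equation*}
which is precisely the claimed identity.

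The only point requiring care, and the one I expect to be the main obstacle, is the region over which the convergence is valid. The power series for $\Pi(\cdot)$ is guaranteed to converge only on the disc determined by the nearest singularity of the map $p\mapsto (I_\nu\otimes A(p)-S^\top\otimes I_n)^{-1}\VEC(B(p)L)$ in~\eqref{eq:solutionSylvesterEq}, that is, on a neighbourhood of $p=0$ rather than necessarily on all of $\mathcal{P}$. Accordingly, I would state the limit for $p$ in this neighbourhood, consistent with the observation following Definition~\ref{def:appParaMM} that the approximation is accurate near $p=0$. If a uniform rather than pointwise statement were desired, I would upgrade the argument to local uniform convergence on any compact subset of the disc of convergence, using the standard fact that a power series converges uniformly on compact subsets of its disc of convergence, together with the boundedness of $C(\cdot)$ thereon.
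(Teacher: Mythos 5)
Your proposal is correct and follows essentially the same route as the paper, whose entire proof is the single line ``directly follows from Theorem~\ref{thm:approximatePi}'': you simply make explicit the two implicit steps, namely convergence of the partial sums $\widehat{\Pi}_N(p)\to\Pi(p)$ from the series representation~\eqref{eq:PiTaylor} and passing the fixed factor $C(p)$ through the limit by linearity. Your added caveat that the convergence is only guaranteed on the disc of convergence around the expansion point (consistent with the paper's own phrase ``locally defined on $\mathcal{P}$'' in the proof of Theorem~\ref{thm:approximatePi}) is a legitimate refinement that the paper's terse statement glosses over.
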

\begin{proof}
    The proof directly follows from Theorem \ref{thm:approximatePi}.
\end{proof}

With Definition \ref{def:appParaMM}, the family of parametric {reduced-order} models that match the $N$-th approximate parametric moment of system \eqref{eq:sys} at $(S,L)$ {on} $\mathcal{P}$ is given by
\begin{equation}\label{eq:appParaROM}
    \begin{aligned}
        \Dot{\xi}(t,p) &= (S - {G(p)}L)\xi(t,p) + {G(p)}u(t),\\
        \psi(t,p) &= \widehat{C\Pi}_N(p)\xi(t,p),
    \end{aligned}
\end{equation}
with $G(p)$ such that $\sigma(S - {G(p)}L) \cap \sigma(S) = \emptyset$ for all $p \in \mathcal{P}$ and $\nu < n$. Note that by Corollary \ref{thm:convergeTaylorSeries}, this approximation is with an arbitrary level of accuracy.

We can also design $G(\cdot)$ in~\eqref{eq:appParaROM} to preserve stability or dissipativity using the results of Theorem~\ref{thm:dissipativity}.
To this end, in addition to the mapping $\Pi(\cdot)$ we also require the mapping $\XX(\cdot)$. Similarly to what has been done with $\Pi(\cdot)$,
$\XX(\cdot)$ may be replaced by its $N$-th approximation $\widehat \XX_N(p) \coloneqq \sum_{j = 0}^{N-1} p^j {\XX}_j$, with $\XX_j$ defined 
as the solution of a nested series of equations similar to the ones in Theorem~\ref{thm:approximatePi}.
For instance, for stability, the coefficients $\XX_j$ are the solutions of the nested Lyapunov-like equations
    \begin{equation}\label{eq:nestedLyapunovEq}
        \begin{aligned}
            \widehat{A}_0^\top \mathcal{X}_0 + \mathcal{X}_0 \widehat{A}_0 &= -Q, \\
            \widehat{A}_0^\top \mathcal{X}_1 + \mathcal{X}_1 \widehat{A}_0 &= -\left(\widehat{A}_1^\top \mathcal{X}_0 + \mathcal{X}_0 \widehat{A}_1\right), \\
            &\vdots \\
            \widehat{A}_0^\top \mathcal{X}_j + \mathcal{X}_j \widehat{A}_0 &= - \sum_{k = 1}^j \left(\widehat{A}_k^\top \mathcal{X}_{j-k} + \mathcal{X}_{j-k} \widehat{A}_k\right),
        \end{aligned}
    \end{equation}
for $j \geq 2$ integer. 
The matrices $\widehat{A}_j$ are the same as those in~\eqref{eq:ApTaylor}, while $Q$ can be selected as any positive definite matrix. 
In fact, here $Q$ parametrises the set of stable parametric reduced-order models~\eqref{eq:prom} (or~\eqref{eq:appParaROM}).
Under Assumptions~\ref{ass:stab} and~\ref{ass:analyticFun}, analogously to Theorem~\ref{thm:approximatePi} and Corollary~\ref{thm:convergeTaylorSeries}, it can be shown that $\lim_{N \to \infty} \widehat{\XX}_N(p) =  \XX(p)$.
A similar set of nested equations can be formed for the dissipativity-preserving case. 
We omit the presentation of these equations for reasons of space.

\subsection{Parametric Moment via Basis Functions} \label{sec:dataDriven}
The model-based approach presented in Section \ref{sec:MatrixBased} hinges on the series expansion around a point in $\mathcal{P}$. Hence, as will be shown in Section~\ref{sec:linearExample}, the deviation error between the obtained $\widehat{C\Pi}_N(\cdot)$ and the exact parametric moment ${\overline{C\Pi}}(\cdot)$ typically grows when the parameter value of interest is far away from the expansion point. 
Moreover, the previous method hinges on the knowledge of the expansions~\eqref{eq:Ap}-\eqref{eq:BpCp}, which may be unavailable in many cases of practical interest. In this section, we propose an alternative (model-based) approach to approximate ${\overline{C\Pi}}(\cdot)$ that allows for a more uniform deviation error across the entire parameter space $\mathcal{P}$. A data-driven extension that exploits only input-output time-domain samples is provided in a subsequent section.

Let $\overline{C\Pi}^i(\cdot)$ denote the $i$-th column of ${\overline{C\Pi}}(\cdot)$, and assume\footnote{We use this as a standing assumption throughout this section.}
\begin{equation}\label{eq:standingAssCPiBasisFun}
    \overline{C\Pi}^i(p) = \sum_{j = 1}^M \varphi_j(p) \gamma^i_j,
\end{equation}
where $\varphi_j: \mathcal{P} \to \mathbb{R}$, with $j = 1, \dots, M$ ($M$ may be $\infty$), are bounded basis functions, and $\gamma_j^i \in \mathbb{R}$, with $j = 1, \dots, M$ and $i = 1, \dots, \nu$, are associated constant coefficients. Thus, the parametric moment is expressed in terms of a weighted sum of the basis functions as
\begin{equation} \label{eq:weightedSumCPI}
    {\overline{C\Pi}}(p) = \sum_{j=1}^{M} 
    \varphi_j(p)
     \begin{bmatrix}
         \gamma^1_j & \dots & \gamma^\nu_j
     \end{bmatrix},
\end{equation}
or, in matrix form, as
\begin{equation}\label{eq:compactWeightedSumCPI}
    {\overline{C\Pi}}(p) 
     =  \underbrace{\begin{bmatrix}
        \varphi_1(p) & \dots & \varphi_M(p) \\
    \end{bmatrix}}_{\displaystyle \Phi(p)}
    \underbrace{\begin{bmatrix}
        \gamma^1_1 & \dots & \gamma^\nu_1 \\
        \vdots & \ddots & \vdots \\
        \gamma^1_M & \dots & \gamma^\nu_M \\
    \end{bmatrix}}_{\displaystyle \Gamma},
\end{equation}
where {$\Phi: \mathcal{P} \to \mathbb{R}^{1 \times M}$} is a vector-valued mapping that contains a basis function in each entry and $\Gamma \in \mathbb{R}^{M \times \nu}$ is the associated weight matrix.

In reality, due to either the fact that the full information of the set of basis functions that exactly describe $\overline{C\Pi}^i(\cdot)$ may not be available or the fact that $M$ (the number of basis functions in the set) is too large or even infinite, we consider only a subset of the basis functions $\varphi_j(\cdot)$, with $j = 1, \hdots, N$, and $N \leq M$ finite\footnote{Slightly abusing the notation, the symbol $N$, which was used in the previous section to indicate the number of retained terms $\Pi_j$, also denotes the number of basis functions here to reflect the fact that it remains to be the number of terms in the approximate parametric moment.}. Based on these basis functions, \eqref{eq:compactWeightedSumCPI} can be rewritten as
\begin{equation}\label{eq:appCPIBasisFun}
    {\overline{C\Pi}}(p) = 
    \underbrace{\begin{bmatrix}
        \varphi_1(p) & \dots & \varphi_N(p) \\
    \end{bmatrix}}_{\displaystyle \Phi_N(p)}
    \underbrace{\begin{bmatrix}
        \widetilde{\gamma}^1_1 & \dots &  \widetilde{\gamma}^\nu_1 \\
        \vdots & \ddots & \vdots \\
        \widetilde{\gamma}^1_N & \dots &  \widetilde{\gamma}^\nu_N \\
    \end{bmatrix}}_{\displaystyle \widetilde{\Gamma}_N} + e(p),
\end{equation}
with $e(\cdot)$ the least-squares error resulting by projecting~\eqref{eq:compactWeightedSumCPI} onto the function space identified by $N$ basis functions.

From a practical point of view, we define an approximate version of the parametric moment by using a finite number of basis functions $\varphi_j(\cdot)$, with $j = 1, \hdots, N$, while neglecting the least-squares error $e(\cdot)$.

\begin{definition}\label{def:appParaMMBF}
    We call $\widetilde{C\Pi}_N(p) := \Phi_N(p) \widetilde{\Gamma}_N$, with $\Phi_N(p)$ and $\widetilde{\Gamma}_N$ defined in~\eqref{eq:appCPIBasisFun}, the \textit{$N$-th approximate parametric moment} (via basis functions) of system \eqref{eq:sys} at $(S,L)$ on $\mathcal{P}$.
\end{definition}

Definition \ref{def:appParaMMBF} provides an approximation of the parametric moment $\overline{C\Pi}(\cdot)$ using a finite number of basis functions. Although the approximation error $e(\cdot)$ is not exactly quantifiable, $\widetilde{C\Pi}_N(\cdot)$ is the best approximation (in the sense of the next theorem) on the considered function subspace.

In the following theorem, we provide an approach to uniquely determine $\widetilde{\Gamma}_N$. This result is established by assuming that the set of basis functions $\varphi_j(\cdot)$ is not linearly dependent and that the values of $\overline{C\Pi}(p)$ are available for a ``rich'' enough set of values of $p$, formalised by the following assumption.

\begin{assumption} \label{ass:unisolvantCondition}
    The set of basis functions $\{\varphi_1, \varphi_2, \hdots, \varphi_N\}$ are \textit{unisolvent}\footnote{see \cite[Section 2.4]{davis1975interpolation} for the definition.} on $\mathcal{P}$, \ie{}, the only solution to 
    $$
    \sum_{j=1}^N c_j \varphi_j(p_k) = 0, \quad \forall \, k = 1, \cdots, K \ge N,
    $$
    is the trivial solution $c_1 = c_2 = \hdots = c_N = 0$ for any set $\{p_k\}_{k=1}^K \subset \mathcal{P}$.
\end{assumption}

\begin{theorem}\label{thm:paraMMBasisFuns}
Suppose Assumptions \ref{ass:obserAndNoCommonEig} and \ref{ass:unisolvantCondition} hold. Let the set $\{p_k\}_{k=1}^K \subset \mathcal{P}$ be simple, and the matrices $\Upsilon_K \in \mathbb{R}^{K \times N}$ and $R_K \in \mathbb{R}^{K \times \nu}$ {be defined as}
\begin{equation*}
    \Upsilon_K {:=} 
    \begin{bmatrix}
        \varphi_1(p_1) & \dots & \varphi_N(p_1) \\
        \vdots & \ddots & \vdots \\
        \varphi_1(p_K) & \dots & \varphi_N(p_K) \\
    \end{bmatrix} = 
    \begin{bmatrix}
        \Phi_N(p_1) \\
        \vdots \\
        \Phi_N(p_K) \\
    \end{bmatrix},
\end{equation*}
and 
\begin{equation*}
R_K {:=} 
    \begin{bmatrix}
        {\overline{C\Pi}}(p_1) \\
        \vdots \\
        {\overline{C\Pi}}(p_K) 
    \end{bmatrix}. 
\end{equation*}
Then
\begin{equation}\label{eq:appGammaN}
    \widetilde{\Gamma}_N = (\Upsilon_K^\top \Upsilon_K)^{-1} \Upsilon_K^\top R_K
\end{equation}
is the unique least-squares estimate of $\Gamma$.
\end{theorem}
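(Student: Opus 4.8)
The plan is to recognise \eqref{eq:appGammaN} as the closed-form solution of a standard linear least-squares problem and to establish its uniqueness by means of Assumption~\ref{ass:unisolvantCondition}. First, I would note that Assumption~\ref{ass:obserAndNoCommonEig} guarantees that the parametric moment $\overline{C\Pi}(p_k)$ is well defined at each sample point $p_k$ (these being distinct since $\{p_k\}_{k=1}^K$ is simple), so that the data matrix $R_K$ is well defined and the rows of $\Upsilon_K$ correspond to genuinely distinct evaluation points.

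Next, I would pose the estimation problem. Neglecting the projection error $e(\cdot)$ in \eqref{eq:appCPIBasisFun} and stacking the relation $\overline{C\Pi}(p_k) = \Phi_N(p_k)\widetilde{\Gamma}_N$ over $k = 1, \dots, K$ yields the overdetermined ($K \ge N$) linear system $\Upsilon_K \widetilde{\Gamma}_N = R_K$. The least-squares estimate is then the minimiser of the residual in Frobenius norm, $\min_{\widetilde{\Gamma}_N} \|R_K - \Upsilon_K \widetilde{\Gamma}_N\|_F^2$ over $\widetilde{\Gamma}_N \in \mathbb{R}^{N\times\nu}$. Because the Frobenius norm is the sum of the squared norms of the columns, this matrix problem decouples into $\nu$ independent ordinary least-squares problems, one for each column of $\widetilde{\Gamma}_N$ and the matching column of $R_K$; by the standard theory of linear least squares, every minimiser satisfies the normal equations $\Upsilon_K^\top \Upsilon_K \widetilde{\Gamma}_N = \Upsilon_K^\top R_K$.

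The core step is to show that $\Upsilon_K^\top \Upsilon_K$ is invertible, so that the normal equations admit the unique solution \eqref{eq:appGammaN}. Here I would invoke Assumption~\ref{ass:unisolvantCondition}: for any $c = (c_1, \dots, c_N)^\top \in \mathbb{R}^N$, the $k$-th entry of $\Upsilon_K c$ equals $\sum_{j=1}^N c_j \varphi_j(p_k)$, so the homogeneous equation $\Upsilon_K c = 0$ is precisely the system appearing in the unisolvence condition. Unisolvence forces $c = 0$, which means $\Upsilon_K$ has trivial null space and hence full column rank $N$ (consistent with $K \ge N$). Full column rank implies $\Upsilon_K^\top \Upsilon_K \succ 0$, hence invertible, so the minimiser is unique and given by \eqref{eq:appGammaN}.

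The step requiring the most care is the translation of Assumption~\ref{ass:unisolvantCondition} into full column rank of $\Upsilon_K$, and thereby invertibility of $\Upsilon_K^\top \Upsilon_K$; the remainder is the routine closed-form of linear least squares, with the only subtlety being the column-wise decoupling of the matrix-valued problem, which I would justify by the additivity of the Frobenius norm over columns.
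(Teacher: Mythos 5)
Your proposal is correct and follows essentially the same route as the paper: both reduce the matrix estimation problem to $\nu$ column-wise ordinary least-squares problems, derive full column rank of $\Upsilon_K$ from Assumption~\ref{ass:unisolvantCondition} together with the simplicity of $\{p_k\}_{k=1}^K$, and conclude uniqueness of \eqref{eq:appGammaN} from the invertibility of $\Upsilon_K^\top \Upsilon_K$. Your explicit null-space argument ($\Upsilon_K c = 0$ being exactly the unisolvence system) spells out a step the paper leaves implicit, which is a welcome addition rather than a deviation.
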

\begin{proof}
    Assumption \ref{ass:obserAndNoCommonEig} ensures that the data in $R_K$ are well defined.
    Let $\widetilde{\Gamma}_N^i$ and $R_K^i$ be the $i$-th column of $\widetilde{\Gamma}_N$ and $i$-th column of $R_K$, respectively. Then, \eqref{eq:appCPIBasisFun} implies that $\widetilde{\Gamma}_N^i$ can be obtained by solving a least-squares problem\footnote{Recall that $\overline{C\Pi}^i(\cdot)$ denotes the $i$-th column of ${\overline{C\Pi}}(\cdot)$.}
    \begin{equation}\label{eq:leastSquare}
        \begin{aligned}
            \widetilde{\Gamma}_N^i &= \operatorname*{arg\,min}_{\Gamma_x} \, \sum_{k = 1}^{K}\Bigl({\overline{C\Pi}^i}(p_k)-\Phi_N(p_k) \Gamma_x\Bigr)^2 \\
            &= \operatorname*{arg\,min}_{\Gamma_x} \, \| R_K^i - \Upsilon_K \Gamma_x \|_2^2,
        \end{aligned}
    \end{equation}
    for $i = 1, \dots, \nu$, which, by the full column rank of $\Upsilon_K$ implied by Assumption \ref{ass:unisolvantCondition} and the set $\{p_k\}_{k=1}^K$ being simple, has the solution (see, \eg, \cite{bishop2006pattern})
    \begin{equation}\label{eq:optmizerLeastSquare}
        \widetilde{\Gamma}_N^i = (\Upsilon_K^{\top} \Upsilon_K)^{-1}\Upsilon_K^\top R_K^i.
    \end{equation}
    We conclude the proof by repeating the procedure for all $i = 1, \dots, \nu$ resulting in $\widetilde{\Gamma}_N$.
\end{proof}

\begin{remark}\label{rmk:fullColumnRankUpsilon}
    In principle, the unisolvency condition in Assumption \ref{ass:unisolvantCondition} is not restrictive. We note that a wide variety of families of basis functions satisfy this condition, including polynomials, radial basis functions (with strictly positive kernels such as Gaussian and multiquadrics \cite{de2018lectures,fasshauer2007meshfree,mongillo2011choosing}), and Fourier basis functions. However, it is possible in practice that the rank of matrix $\Upsilon _k$ is nevertheless numerically deficient. In this case, we suggest that
    instead of formulating \eqref{eq:appCPIBasisFun} as a least-squares problem, one can alternatively reformulate it as a Ridge regression problem by introducing an additional weight penalty term in the cost function, \ie,
    \begin{equation*}\label{eq:ridgeRegression}
        \begin{aligned}
            \widetilde{\Gamma}_N^i &= \operatorname*{arg\,min}_{\Gamma_x} \, \sum_{k = 1}^{K}\Bigl({\overline{C\Pi}^i}(p_k)-\Phi_N(p_k) \Gamma_x\Bigr)^2 + \sum_{j = 1}^N \lambda_j (\gamma_j^i)^2 \\
            &= \operatorname*{arg\,min}_{\Gamma_x} \, \| R_K^i - \Upsilon_K \Gamma_x \|_2^2 + \Lambda \| \Gamma_x \|_2^2,
        \end{aligned}
    \end{equation*}
    where $\Lambda = \diag(\lambda_1, \dots, \lambda_N)$ contains the regularisation parameters $\lambda_j \geq 0$, for $j = 1, \dots, N$. Then, the optimal weight vector $\widetilde{\Gamma}_N^i$ is obtained by \cite{bishop2006pattern}
    \begin{equation}\label{eq:optmizerRidgeRegression}
        \widetilde{\Gamma}_N^i = (\Upsilon_K^{\top} \Upsilon_K + \Lambda)^{-1}\Upsilon_K^\top R_K^i,
    \end{equation}
    and we select $\Lambda \succ 0$ so that $(\Upsilon_K^{\top} \Upsilon_K + \Lambda)$ is guaranteed to be positive definite and thus invertible. The inclusion of $\Lambda$ can also improve numerical conditioning or robustness.
\end{remark}

Under the standing assumption that the parametric moment $\overline{C\Pi}(\cdot)$ is represented by \eqref{eq:standingAssCPiBasisFun}, we can determine an asymptotic property for $\widetilde{C\Pi}_N(\cdot)$ with $\widetilde\Gamma_N$ computed by \eqref{eq:appGammaN}.

\begin{theorem}\label{thm:convergeBasisFuns}
    Suppose Assumptions \ref{ass:obserAndNoCommonEig} and \ref{ass:unisolvantCondition} hold. Then $\lim_{N \to M}\widetilde{\Gamma}_N = \Gamma$ and $\lim_{N \to M}\widetilde{C\Pi}_N(p) = \overline{C\Pi}(p)$.
\end{theorem}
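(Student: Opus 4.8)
The plan is to exploit the standing assumption \eqref{eq:standingAssCPiBasisFun}, under which $\overline{C\Pi}(\cdot)$ lies \emph{exactly} in the span of the full family $\{\varphi_1,\dots,\varphi_M\}$ with coefficient matrix $\Gamma$, and to show that the least-squares estimate \eqref{eq:appGammaN} of Theorem~\ref{thm:paraMMBasisFuns} recovers $\Gamma$ once all basis functions are retained. Since the symbol $\lim_{N\to M}$ denotes the value attained at $N=M$ when $M<\infty$ and a genuine limit $N\to\infty$ when $M=\infty$, I would treat these two cases separately. Throughout, Assumption~\ref{ass:obserAndNoCommonEig} guarantees that the entries of $R_K$ are well defined, exactly as in the proof of Theorem~\ref{thm:paraMMBasisFuns}.

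For $M<\infty$ I would set $N=M$ and note that \eqref{eq:standingAssCPiBasisFun} gives, sample by sample, $\overline{C\Pi}(p_k)=\Phi_M(p_k)\Gamma$, so that the data matrix satisfies $R_K=\Upsilon_K\Gamma$ with \emph{no} projection error. By Assumption~\ref{ass:unisolvantCondition} together with $K\ge N=M$, the matrix $\Upsilon_K$ has full column rank, hence $\Upsilon_K^\top\Upsilon_K$ is invertible, and substitution into \eqref{eq:appGammaN} yields $\widetilde{\Gamma}_M=(\Upsilon_K^\top\Upsilon_K)^{-1}\Upsilon_K^\top\Upsilon_K\Gamma=\Gamma$. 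Consequently the error $e(\cdot)$ in \eqref{eq:appCPIBasisFun} vanishes identically and $\widetilde{C\Pi}_M(p)=\Phi_M(p)\Gamma=\overline{C\Pi}(p)$ for all $p\in\mathcal{P}$, which is the claim with the limit trivially attained.

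For $M=\infty$ I would decompose the data into the contribution of the retained basis functions and a tail. Splitting $\Gamma$ into its first $N$ rows $\Gamma_{1:N}$ and the remainder, the $k$-th row of $R_K$ reads $\Phi_N(p_k)\Gamma_{1:N}+\sum_{j>N}\varphi_j(p_k)[\gamma_j^1\ \cdots\ \gamma_j^\nu]$, so \eqref{eq:appGammaN} gives $\widetilde{\Gamma}_N=\Gamma_{1:N}+(\Upsilon_K^\top\Upsilon_K)^{-1}\Upsilon_K^\top T_K$, where $T_K$ collects the tail evaluations. Because the series \eqref{eq:standingAssCPiBasisFun} converges and the $\varphi_j$ are bounded, $\Gamma_{1:N}\to\Gamma$ entrywise; the remaining task is to show the bias $(\Upsilon_K^\top\Upsilon_K)^{-1}\Upsilon_K^\top T_K$ vanishes, after which $\lim_{N\to\infty}\widetilde{\Gamma}_N=\Gamma$ follows and the pointwise limit $\widetilde{C\Pi}_N(p)\to\overline{C\Pi}(p)$ is inherited from boundedness of $\Phi_N(\cdot)$.

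I expect this last step to be the main obstacle. Controlling the bias requires letting $K\to\infty$ with $N$, so that the discrete normal equations approach the continuous $L^2$-projection onto $\mathrm{span}\{\varphi_1,\dots,\varphi_N\}$, and then arguing that this projection recovers the leading coefficients in the limit. The latter is not delivered by the mere nondegeneracy of Assumption~\ref{ass:unisolvantCondition}; it rests on the uniqueness of the expansion \eqref{eq:standingAssCPiBasisFun}, which makes the coefficient functionals well defined so that the $L^2$-convergence of the projection to $\overline{C\Pi}^i$ forces the fitted coefficients to converge to the $\gamma_j^i$. For an orthogonal family this is immediate, and the finite-$M$ computation above is simply the degenerate instance in which the tail, and hence the bias, is already exactly zero.
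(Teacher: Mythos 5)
Your proposal follows the same core idea as the paper's own proof---namely that \eqref{eq:appGammaN} is a least-squares projection which, once all $M$ basis functions of the standing assumption \eqref{eq:standingAssCPiBasisFun} are retained, recovers the true weight matrix $\Gamma$---but you carry it out with considerably more care than the paper does. The paper's entire argument is three sentences: it notes well-posedness from Assumptions~\ref{ass:obserAndNoCommonEig} and~\ref{ass:unisolvantCondition}, asserts that $\widetilde{\Gamma}_N$ ``represents a projection of $\Gamma$ onto a function space spanned by $N$ basis functions,'' and concludes that this projection tends to $\Gamma$ as $N$ tends to $M$, whence $\widetilde{C\Pi}_N(p) = \Phi_N(p)\widetilde{\Gamma}_N \to \Phi(p)\Gamma = \overline{C\Pi}(p)$. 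Your finite-$M$ computation is exactly the rigorous content behind that assertion: \eqref{eq:standingAssCPiBasisFun} gives $R_K = \Upsilon_K \Gamma$ with no residual, and full column rank of $\Upsilon_K$ (Assumption~\ref{ass:unisolvantCondition} with $K \ge N = M$) yields $\widetilde{\Gamma}_M = (\Upsilon_K^\top \Upsilon_K)^{-1}\Upsilon_K^\top \Upsilon_K \Gamma = \Gamma$ identically, so the ``limit'' is attained exactly.

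The obstacle you flag for $M = \infty$ is genuine: in the decomposition $\widetilde{\Gamma}_N = \Gamma_{1:N} + (\Upsilon_K^\top\Upsilon_K)^{-1}\Upsilon_K^\top T_K$, unisolvency alone does not make the bias term vanish, $K$ must grow along with $N$ (Assumption~\ref{ass:unisolvantCondition} requires $K \ge N$, so the sample set itself changes as $N \to \infty$), and the norm of $(\Upsilon_K^\top\Upsilon_K)^{-1}\Upsilon_K^\top$ is uncontrolled along that sequence. You should be aware, however, that the paper does not close this gap either: its proof contains no tail estimate and no discussion of how $K$ scales with $N$; it simply asserts that the projection converges as the dimension of the function space grows. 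So your proposal is not missing anything that the paper supplies. It makes explicit, and for finite $M$ actually proves, what the paper leaves at the level of an assertion, and it correctly isolates the extra structure (orthogonality of the family, or uniform control of the expansion together with a conditioning bound as $K \to \infty$) that a fully rigorous treatment of the $M = \infty$ case would require.
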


\begin{proof}
    Assumption \ref{ass:obserAndNoCommonEig} guarantees the well-definedness of ${\overline{C\Pi}(\cdot)}$ and Assumption \ref{ass:unisolvantCondition} ensures the well-definedness of $\widetilde{\Gamma}_N$ (and so $\widetilde{C\Pi}_N(\cdot)$). Observe that~\eqref{eq:appGammaN} is the solution to a least-squares approximation and thus represents a projection of $\Gamma$ onto a function space spanned by $N$ basis functions. Hence, by recalling \eqref{eq:compactWeightedSumCPI}, $\widetilde{\Gamma}_N$ tends to $\Gamma$ as the dimension of the function space $N$ tends to $M$. As a direct consequence, $\lim_{N \to M} \widetilde{C\Pi}_N(p) = \lim_{N \to M} \Phi_N(p) \widetilde\Gamma_N = \Phi(p) \Gamma = \overline{C\Pi}(p)$.
\end{proof}

Thus, the approximate parametric moment via basis functions exhibits the property that as the number of basis functions $N$ increases, it approximates the parametric moment with a better (or at least not worse) performance in the sense of the optimisation problem~\eqref{eq:leastSquare}.

\begin{remark}
    The procedure outlined in this section to approximate the parametric moment via basis functions can similarly be applied to approximate the mapping $\XX(\cdot)$ defined in Section~\ref{sec:dissipativity} for stability- and dissipativity-preserving model reduction.
    Note that instead of approximating the mapping $\overline{C\Pi}(\cdot)$, one could estimate the mapping $\Pi(\cdot)$ directly using the method of basis functions.
    Then, for stability- or dissipativity-preserving model reduction, we can use any combination of series expansions or basis functions to approximate the mappings $\Pi(\cdot)$ and~$\XX(\cdot)$.
\end{remark}

\subsection{Data-Driven Approximation of Parametric Moment}\label{sec:linearDataDriven}
From a model-based perspective, given $A(\cdot), B(\cdot)$ and $C(\cdot)$, the data in $R_K$ in Theorem \ref{thm:paraMMBasisFuns} can be obtained directly by solving \eqref{eq:SylvesterEq} for each value of $p_k$. However, in certain scenarios, only the sampling of the system's inputs and outputs, possibly together with a measurement of the target parameter (\eg, temperature), is available, while the high-order parametric model is unknown and prohibitively expensive to identify. Therefore, it is instrumental to have a data-driven (\ie, model-free) approach that yields the parametric moment without the need for such system knowledge.
In Theorem \ref{thm:moment2steadystate}, it has been noted that each $\overline{C\Pi}(p_k)$ can be characterised by the steady-state response, provided it exists, of the output of system \eqref{eq:sys} at $p_k$ when driven by the signal generator \eqref{eq:SignalGenerator}. This steady-state characterisation allows for a reformulation of the result in Theorem \ref{thm:paraMMBasisFuns} based only on time-domain input-output data and the measurement of the parameter $p$. This approach is explored in this section.

Consider the interconnection of system \eqref{eq:sys} with $p = p_k$ and the signal generator \eqref{eq:SignalGenerator}. We denote a set of sample times $T_i^h = \{t_{i-h+1}, \dots, t_{i-1},t_i\}$, where $0 \leq t_0<t_1<\dots<t_{i-h}<\dots<t_i<\dots<t_q$, with $h>0$ and $q\geq h$. The set $T_i^h$ represents a moving window of $h$ sample times $t_k$. Then we evaluate the signal $\omega$ and the output response $y$ of the interconnected system over $T_i^h$ for each element in $\{p_k\}_{k=1}^K$. We introduce a new assumption.

\begin{assumption}\label{ass:fullRankOmega}
    All the eigenvalues of the matrix $S$ have zero real parts and are simple. The pair $(S, \omega(0))$ is excitable. Furthermore, $\sigma(A(p)) \subset \mathbb{C}_{< 0}$ for all $p \in \mathcal{P}$.
\end{assumption}

Note that Assumption \ref{ass:fullRankOmega} implies the second half of Assumption \ref{ass:obserAndNoCommonEig} due to the requirements on the spectra of $S$ and $A(\cdot)$. It is however a stronger assumption, as it ensures the existence of a well-defined steady-state output response, see point \ref{thm:moment2steadystateP2}) in Theorem \ref{thm:moment2steadystate}. We are ready to give the next result.

\begin{theorem}\label{thm:paraMMDD}
Suppose Assumptions \ref{ass:obserAndNoCommonEig}, \ref{ass:unisolvantCondition}, and \ref{ass:fullRankOmega} hold. Let $\Upsilon_K$ be defined as in Theorem \ref{thm:paraMMBasisFuns}, and the time-snapshots $U_{i,K} \in \mathbb{R}^{hK \times \nu N}$ and $O_{i,K} \in \mathbb{R}^{hK}$, with $K \geq N$ and $h \geq \nu$, be defined as
\begin{equation*}
    U_{i,K} := \Upsilon_K \otimes U_i,
    \quad
    O_{i,K} {:=}
    \begin{bmatrix}
        Y_{i}(p_1) \\
        \vdots \\
        Y_{i}(p_K)
    \end{bmatrix},
\end{equation*}
where
\begin{equation*}
        U_i {:=} 
        \begin{bmatrix}
            \omega(t_{i-h+1}) & \dots & \omega(t_{i-1}) & \omega(t_{i})
        \end{bmatrix}^\top
    \end{equation*}
    and
    \begin{equation*}
        Y_{i}(p_k) {:=} 
        \begin{bmatrix}
            y(t_{i-h+1},p_k) & \dots & y(t_{i-1},p_k) & y(t_{i},p_k)
        \end{bmatrix}^\top .
    \end{equation*}
Then
\begin{equation}\label{eq:hatGammaNi}
    \VEC(\widetilde{\Gamma}_{i,N}^\top) = (U_{i,K}^\top U_{i,K})^{-1} U_{i,K}^\top O_{i,K}
\end{equation}
is an approximation of the weight matrix $\widetilde\Gamma_N$, namely there exists a sequence $\{t_i\}$ such that $\lim_{t_i \to \infty} \widetilde\Gamma_{i,N} = \widetilde\Gamma_N$.
\end{theorem}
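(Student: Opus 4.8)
The plan is to reduce Theorem~\ref{thm:paraMMDD} to Theorem~\ref{thm:paraMMBasisFuns} by showing that, as $t_i\to\infty$, the data-driven least-squares problem~\eqref{eq:hatGammaNi} converges to the model-based least-squares problem~\eqref{eq:appGammaN}. The key structural fact I would exploit is that, under Assumption~\ref{ass:fullRankOmega}, the steady-state response of the interconnection of~\eqref{eq:sys} with the signal generator~\eqref{eq:SignalGenerator} at each $p_k$ is exactly $y_{\mathrm{ss}}(t,p_k)=\overline{C\Pi}(p_k)\,\omega(t)$, by point~\ref{thm:moment2steadystateP2}) of Theorem~\ref{thm:moment2steadystate}, since on the invariant manifold $x(t,p_k)=\Pi(p_k)\omega(t)$ and the transient $\epsilon(t,p_k)=e^{A(p_k)t}(x(0,p_k)-\Pi(p_k)\omega(0))$ decays exponentially because $\sigma(A(p_k))\subset\CC_{<0}$. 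Therefore, for any fixed window length $h$, the stacked output vector $Y_i(p_k)$ satisfies $Y_i(p_k)=U_i\,\overline{C\Pi}(p_k)^\top+\delta_i(p_k)$, where $\delta_i(p_k)$ collects the decaying transient samples and $\lim_{t_i\to\infty}\delta_i(p_k)=0$.

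First I would establish the exact, noise-free identity that the model-based $\widetilde\Gamma_N$ solves. By Theorem~\ref{thm:paraMMBasisFuns}, the $k$-th row of $R_K$ is $\overline{C\Pi}(p_k)=\Phi_N(p_k)\widetilde\Gamma_N+$ (projection residual). Writing out the Kronecker structure $U_{i,K}=\Upsilon_K\otimes U_i$ and $O_{i,K}$, and using the mixed-product and vectorisation identities (footnote~\ref{footnote:kronProperty}), I would show that if each $Y_i(p_k)$ were replaced by its steady-state value $U_i\,\overline{C\Pi}(p_k)^\top$, then the normal equations~\eqref{eq:hatGammaNi} reduce exactly to $\VEC(\widetilde\Gamma_N^\top)=(\Upsilon_K^\top\Upsilon_K)^{-1}\Upsilon_K^\top R_K$ vectorised, i.e. to~\eqref{eq:appGammaN}. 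The key cancellation here is that $(U_{i,K}^\top U_{i,K})^{-1}=(\Upsilon_K^\top\Upsilon_K)^{-1}\otimes(U_i^\top U_i)^{-1}$, which requires $U_i$ to have full column rank~$\nu$; this is where I would invoke $h\ge\nu$ together with excitability of $(S,\omega(0))$ and simplicity of the purely imaginary eigenvalues of $S$, so that the samples $\omega(t_{i-h+1}),\dots,\omega(t_i)$ span $\RR^\nu$ and $U_i^\top U_i\succ0$.

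Next I would handle the transient contamination. Substituting $Y_i(p_k)=U_i\,\overline{C\Pi}(p_k)^\top+\delta_i(p_k)$ into~\eqref{eq:hatGammaNi}, the estimator splits into the exact steady-state term (which equals $\VEC(\widetilde\Gamma_N^\top)$ by the previous step) plus a perturbation term proportional to $(U_{i,K}^\top U_{i,K})^{-1}U_{i,K}^\top\VEC$-stacked-$\delta_i$. Because $\|\delta_i(p_k)\|_2\le c\,e^{-\alpha t_{i-h+1}}$ for some $c,\alpha>0$ uniformly over the finite set $\{p_k\}$, and because the inverse Gram factor stays bounded along a suitably chosen sample sequence, this perturbation vanishes as $t_i\to\infty$. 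I would therefore conclude that there exists a sequence $\{t_i\}$ along which $\lim_{t_i\to\infty}\widetilde\Gamma_{i,N}=\widetilde\Gamma_N$.

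The main obstacle I anticipate is controlling the conditioning of the Gram matrix $U_{i,K}^\top U_{i,K}$ uniformly along the chosen sample sequence. Since $S$ has purely imaginary simple eigenvalues, $\omega(t)$ is a bounded quasi-periodic signal that neither grows nor decays, so the samples $U_i$ do not automatically stay well-conditioned as the window slides; one must argue that a sequence of sampling instants $\{t_i\}$ can be selected (exploiting excitability and the simplicity of $\sigma(S)$) for which $U_i$ retains full rank with a uniform lower bound on its smallest singular value. Establishing this uniform bound---rather than mere pointwise invertibility---is the delicate part, and it is precisely what upgrades the pointwise steady-state identity into the claimed limit. The remaining computations, namely the Kronecker-algebra simplification of the normal equations and the exponential bound on $\delta_i$, are routine once this conditioning is secured.
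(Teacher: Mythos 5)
Your proposal follows essentially the same route as the paper's proof: decompose the measured outputs into the steady-state part $\overline{C\Pi}(p_k)\,\omega(t)$ plus an exponentially decaying transient (point~\ref{thm:moment2steadystateP2}) of Theorem~\ref{thm:moment2steadystate} under Assumption~\ref{ass:fullRankOmega}), reduce the data-driven normal equations to the model-based estimate of Theorem~\ref{thm:paraMMBasisFuns} through the Kronecker identities, notably $(U_{i,K}^\top U_{i,K})^{-1}=(\Upsilon_K^\top\Upsilon_K)^{-1}\otimes(U_i^\top U_i)^{-1}$, and conclude by letting the transient contribution vanish as $t_i\to\infty$. The conditioning issue you single out as the delicate step is treated more lightly in the paper, which invokes excitability of $(S,\omega(0))$ via \cite[Lemma~3]{padoan2017geometric} to select a sequence $\{t_i\}$ with $\rank(U_i)=\nu$, together with element-wise boundedness of $U_{i,K}$ and invertibility of $U_{i,K}^\top U_{i,K}$, to drive the error term $(U_{i,K}^\top U_{i,K})^{-1}U_{i,K}^\top E_{i,K}$ to zero.
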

\begin{proof}
    By Assumption~\ref{ass:obserAndNoCommonEig}, the parametric moment is well-defined. Under Assumption~\ref{ass:fullRankOmega}, it follows from point \ref{thm:moment2steadystateP2}) of Theorem~\ref{thm:moment2steadystate} that~\eqref{eq:manifold} holds for all $p \in \mathcal{P}$, namely,
    \begin{equation}\label{eq:manifold4AllP}
        {x(t,p)} = \Pi(p)\omega(t) + {\epsilon(t,p)},
    \end{equation}
    with $\epsilon(t,p)$ an exponentially decaying signal in time. Multiplying each side of \eqref{eq:manifold4AllP} by $C(p)$ yields
    \begin{equation}\label{eq:steadyStateResponse}
            y(t,p) = {\overline{C\Pi}}(p)\omega(t) + C(p){\epsilon(t,p)}.
    \end{equation}
    Computing \eqref{eq:steadyStateResponse} at all elements of $T_i^h$ and $\{p_k\}_{k=1}^K$ yields
    \begin{equation*}
        O_{i,K} = \underbrace{\begin{bmatrix}
            {\overline{C\Pi}}(p_1)\omega(t_{i-h+1})\\
            \vdots \\
            {\overline{C\Pi}}(p_1)\omega(t_{i})\\
            \vdots \\
            {\overline{C\Pi}}(p_K)\omega(t_{i-h+1})\\
            \vdots \\
            {\overline{C\Pi}}(p_K)\omega(t_{i})\\
        \end{bmatrix}}_{\displaystyle \VEC(U_i R_K^\top)} 
        + \underbrace{\begin{bmatrix}
            C(p_1){\epsilon(t_{i-h+1},p_1)}\\
            \vdots \\
            C(p_1){\epsilon(t_{i},p_1)}\\
            \vdots \\
            C(p_K){\epsilon(t_{i-h+1},p_K)}\\
            \vdots \\
            C(p_K){\epsilon(t_{i},p_K)}
        \end{bmatrix}}_{\displaystyle E_{i,K}}.
    \end{equation*}
    Then, vectorising the above equation and applying the aforementioned property of the Kronecker product yields
    \begin{equation}\label{eq:OiKVECRK}
        O_{i,K} = (I_K\otimes U_i)\VEC(R_K^\top) + E_{i,K}.
    \end{equation}
    It has been shown in \cite[Lemma~3]{padoan2017geometric} that by the excitability of $(S, \omega(0))$ in Assumption \ref{ass:fullRankOmega} it is trivial to select a time sequence such that $\rank(U_i) = \nu$. From the property $\rank(A\otimes B) = \rank{(A)} \rank{(B)}$, it follows that $(I_K\otimes U_i)$ has full column rank. Thus, we obtain from~\eqref{eq:OiKVECRK} that
    \begin{equation*}
        \VEC(R_K^\top) = ((I_K\otimes U_i)^\top(I_K\otimes U_i))^{-1}(I_K\otimes U_i)^\top {\Delta_{i,K}},
    \end{equation*}
    with ${\Delta_{i,K}} := O_{i,K}-E_{i,K}$. Then, by the transposition property\footnote{$(A\otimes B)^\top = A^\top \otimes B^\top$.}, the mixed-product property\footnote{$(A\otimes B)(C\otimes D)=(AC)\otimes(BD)$, if $A$, $B$, $C$, and $D$ are matrices of such sizes that the matrix products $AC$ and $BD$ can be formed.} and the inverse {property}\footnote{$(A\otimes B)^{-1} = A^{-1} \otimes B^{-1}$, with $A$ and $B$ invertible.}, we have
    \begin{align}\label{eq:VECRK}
            \VEC(R_K^\top) &= ((I_K\otimes U_i^\top)(I_K\otimes U_i))^{-1}(I_K\otimes U_i^\top) {\Delta_{i,K}} \notag \\ 
            &= (I_K\otimes U_i^\top U_i)^{-1}(I_K\otimes U_i^\top) {\Delta_{i,K}} \notag \\
            & = (I_K \otimes (U_i^\top U_i)^{-1})(I_K\otimes U_i^\top) {\Delta_{i,K}} \notag \\
            &= I_K\otimes ((U_i^\top U_i)^{-1}U_i^\top) {\Delta_{i,K}}.
    \end{align}
    Now recall that from Theorem \ref{thm:paraMMBasisFuns}, we have
    \begin{equation}\label{eq:appGammaNThm6}
    \widetilde{\Gamma}_N = (\Upsilon_K^\top \Upsilon_K)^{-1} \Upsilon_K^\top R_K.
    \end{equation}
    Taking the transpose of both sides and using the vectorisation operator and the property of the Kronecker product on~\eqref{eq:appGammaNThm6} yields
    \begin{equation*}
        \begin{aligned}
            \VEC(\widetilde\Gamma_N^\top) &= \VEC(R_K^\top \Upsilon_K (\Upsilon_K^\top \Upsilon_K)^{-1}) \\
            &= (((\Upsilon_K^\top \Upsilon_K)^{-1} \Upsilon_K^\top) \otimes I_{\nu}) \VEC(R_K^\top).
        \end{aligned}
    \end{equation*}
    Substituting $\VEC(R_K^\top)$ with its expression \eqref{eq:VECRK} and using the mixed-product property of the Kronecker product yields
    \begin{equation}
        \begin{aligned}
            \VEC(\widetilde\Gamma_N^\top) &= (((\Upsilon_K^\top \Upsilon_K)^{-1} \Upsilon_K^\top) \otimes I_{\nu}) \VEC(R_K^\top) \\
            &= ((\Upsilon_K^\top \Upsilon_K)^{-1} \Upsilon_K^\top) \otimes ((U_i^\top U_i)^{-1}U_i^\top) {\Delta_{i,K}} \\
            &= ((\Upsilon_K^\top \Upsilon_K)^{-1}\otimes (U_i^\top U_i)^{-1})(\Upsilon_K^\top \otimes U_i^\top) {\Delta_{i,K}} \\
            &= ((\Upsilon_K^\top \otimes U_i^\top)(\Upsilon_K\otimes U_i))^{-1}(\Upsilon_K^\top \otimes U_i^\top) {\Delta_{i,K}}.
        \end{aligned}
    \end{equation}
    Recalling that $U_{i,K}$ is defined as $\Upsilon_K \otimes U_i$, we further simplify
    \begin{equation}\label{eq:VECGammaN}
        \begin{aligned}
            \VEC(\widetilde\Gamma_N^\top) 
            &= (U_{i,K}^\top U_{i,K})^{-1} U_{i,K}^\top {\Delta_{i,K}} \\
            &= (U_{i,K}^\top U_{i,K})^{-1} U_{i,K}^\top(O_{i,K}-E_{i,K}) \\
            &= \VEC(\widetilde{\Gamma}_{i,N}^\top)-(U_{i,K}^\top U_{i,K})^{-1} U_{i,K}^\top E_{i,K}. 
        \end{aligned}
    \end{equation}
    Note that by Assumption \ref{ass:fullRankOmega}, $\omega(t) = e^{St} \omega(0)$ is bounded for all $t \in \mathbb{R}$, and additionally, by the boundedness of $\Upsilon_K$ (directly following from the assumption on the boundedness of basis functions $\varphi_j(\cdot)$), all the elements of $U_{i, K}$ are bounded. Furthermore, consider the facts that each entry of $E_{i,K}$ decays to $0$ as $t_i$ tends to infinity, $U_{i,K}$ is element-wise bounded, and $U_{i,K}^\top U_{i,K}$ is invertible (see Remark \ref{re:fullCollumnRank}). Hence, we have
    \begin{multline*}
        \lim_{t_i \to \infty} (\VEC(\widetilde\Gamma_{i,N}^\top) - \VEC(\widetilde\Gamma_N^\top)) \\ = \lim_{t_i \to \infty} (U_{i,K}^\top U_{i,K})^{-1} U_{i,K}^\top E_{i,K} = 0,
    \end{multline*}
    which completes the proof.
\end{proof}

\begin{remark}\label{re:fullCollumnRank}
    The derivation of \eqref{eq:VECRK} and \eqref{eq:VECGammaN} uses the property of the inverse of a Kronecker product, which requires that $U_i^\top U_i$, $\Upsilon_K^\top \Upsilon_K$, and $U_{i,K}^\top U_{i,K}$ be invertible. The matrix $U_i$ has full column rank by Assumption \ref{ass:fullRankOmega} (see \cite[Lemma~3]{padoan2017geometric}) and Assumption \ref{ass:unisolvantCondition} ensures the full column rank of $\Upsilon_K$. Then, by the property of the Kronecker product, we obtain
    \begin{equation*}
        \rank(U_{i,K}) = \rank(\Upsilon_K \otimes U_i) = \rank(\Upsilon_K) \rank(U_i) = \nu N,
    \end{equation*}
    \ie, $U_{i,K}$ has full column rank.
\end{remark}

\begin{remark}
    Theorem \ref{thm:paraMMDD} provides the least-squares approximation of the weight matrix $\widetilde\Gamma_N$ in a data-driven setting. Equation \eqref{eq:hatGammaNi} presents an optimal unbiased estimator (in the sense of maximum likelihood) when the output data $y(t,p)$ are corrupted by zero-mean Gaussian white noise. For the case when both the input data $\omega(t)$ and the output data $y(t,p)$ are measured under Gaussian white noise, \eqref{eq:hatGammaNi} can be modified as the solution of a total least-squares problem, see \cite{markovsky2007overview}, to ensure the optimal estimation of $\widetilde\Gamma_N$ (in the sense of maximum likelihood). Since the least-squares estimator is consistent\footnote{An estimator is consistent if it converges in probability to the true value of the objective as the sample size tends to infinity.}, it is possible to reduce the impact of noise by augmenting the measured data, \ie, increasing $h$. Another common de-noising technique is regularisation \cite[Chapter~6]{boyd2004convex}, or by using a kernel-based moment matching technique~\cite{moreschini2024nonlinear}. The readers are referred to \cite{zhao2024strategies} for additional strategies to alleviate the impact of more general noise in data-driven moment matching methods.
\end{remark}

With the data-driven approximation $\widetilde{\Gamma}_{i, N}$ in Theorem~\ref{thm:paraMMDD}, we define a data-driven version of the approximate parametric moment via basis functions.

\begin{definition}\label{def:EstimatedparaMoment}
    We call $\widetilde{C\Pi}_{i,N}(p) = \Phi_N(p) \widetilde\Gamma_{i,N}$, with $\Phi_N(p)$ defined in~\eqref{eq:appCPIBasisFun} and $\widetilde\Gamma_{i,N}$ computed by~\eqref{eq:hatGammaNi}, the (data-driven) \textit{$N$-th approximate parametric moment} (via basis functions) of system \eqref{eq:sys} at $(S,L)$ on $\mathcal{P}$.
\end{definition}

The data-driven $N$-th approximate parametric moment enjoys an asymptotic property, as shown next.

\begin{theorem}\label{thm:convergeBasisFunsDD}
    Suppose Assumptions \ref{ass:obserAndNoCommonEig}, \ref{ass:unisolvantCondition}, and \ref{ass:fullRankOmega} hold. Then $\lim_{t_i \to \infty} \Bigl({\overline{C\Pi}}(p) - \lim_{N \to M}\widetilde{C\Pi}_{i,N}(p)\Bigr) = 0$.
\end{theorem}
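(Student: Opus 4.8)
The plan is to split the double-limit error into a \emph{basis-truncation} part, controlled by Theorem~\ref{thm:convergeBasisFuns}, and a \emph{data-driven} (steady-state) part, controlled by Theorem~\ref{thm:paraMMDD}, and then to evaluate the iterated limit from the inside out. Using Definition~\ref{def:EstimatedparaMoment}, the definition $\widetilde{C\Pi}_N(p) = \Phi_N(p)\widetilde\Gamma_N$ of the model-based approximant from Theorem~\ref{thm:paraMMBasisFuns}, and the linearity of the parametric moment in the weight matrix, I would first write
\begin{equation*}
    \overline{C\Pi}(p) - \widetilde{C\Pi}_{i,N}(p) = \bigl(\overline{C\Pi}(p) - \widetilde{C\Pi}_N(p)\bigr) + \Phi_N(p)\bigl(\widetilde\Gamma_N - \widetilde\Gamma_{i,N}\bigr).
\end{equation*}
By Theorem~\ref{thm:convergeBasisFuns} the first bracket vanishes as $N \to M$, so the inner limit collapses to
\begin{equation*}
    \overline{C\Pi}(p) - \lim_{N \to M}\widetilde{C\Pi}_{i,N}(p) = -\lim_{N \to M}\Phi_N(p)\bigl(\widetilde\Gamma_{i,N} - \widetilde\Gamma_N\bigr),
\end{equation*}
and it remains to show that the outer limit $t_i \to \infty$ sends the right-hand side to zero.

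Next I would identify the residual $\widetilde\Gamma_{i,N} - \widetilde\Gamma_N$ with the steady-state error already isolated in the proof of Theorem~\ref{thm:paraMMDD}: from~\eqref{eq:VECGammaN}, $\VEC\bigl((\widetilde\Gamma_{i,N} - \widetilde\Gamma_N)^\top\bigr) = (U_{i,K}^\top U_{i,K})^{-1} U_{i,K}^\top E_{i,K}$, where $E_{i,K}$ stacks the signals $C(p_k)\epsilon(t,p_k)$. Under Assumption~\ref{ass:fullRankOmega} one has $\sigma(A(p)) \subset \mathbb{C}_{<0}$ for all $p \in \mathcal{P}$, so each $\epsilon(t,p_k)$ decays exponentially and $E_{i,K} \to 0$ as $t_i \to \infty$; together with the boundedness of the prefactor $(U_{i,K}^\top U_{i,K})^{-1}U_{i,K}^\top$ established in Theorem~\ref{thm:paraMMDD} and Remark~\ref{re:fullCollumnRank}, this yields $\lim_{t_i \to \infty}(\widetilde\Gamma_{i,N} - \widetilde\Gamma_N) = 0$ for each fixed $N$. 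Substituting into the display above and taking the outer limit then gives the claim.

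The main obstacle is the order of the two limits. The statement takes $N \to M$ first and $t_i \to \infty$ second, whereas Theorem~\ref{thm:paraMMDD} only supplies the $t_i$-limit for a \emph{fixed} $N$; when $M = \infty$ one must both show the inner limit exists and, crucially, ensure the $t_i$-decay of the residual survives the passage $N \to M$. I expect to resolve this by bounding $\Phi_N(p)(\widetilde\Gamma_{i,N}-\widetilde\Gamma_N)$ by a quantity whose $t_i$-dependence is controlled purely by the decay rate of $E_{i,K}$, i.e. by the (negative) spectral abscissa of the matrices $A(p_k)$, which is independent of $N$: the basis functions $\varphi_j(\cdot)$ are bounded by assumption, $\omega(t) = e^{St}\omega(0)$ is bounded by Assumption~\ref{ass:fullRankOmega}, and $(U_{i,K}^\top U_{i,K})^{-1}$ remains bounded by the full-rank argument of Remark~\ref{re:fullCollumnRank}. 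This uniformity in $N$ is what legitimises chaining the two limits. As a cleaner alternative whenever $M < \infty$, I would simply set $N = M$ so that the inner limit is $\widetilde{C\Pi}_{i,M}(p)$, and apply Theorem~\ref{thm:paraMMDD} with $N = M$ to obtain $\lim_{t_i\to\infty}\widetilde{C\Pi}_{i,M}(p) = \widetilde{C\Pi}_M(p)$, noting that Theorem~\ref{thm:convergeBasisFuns} forces $\widetilde{C\Pi}_M(p) = \overline{C\Pi}(p)$ exactly (with a full basis the least-squares fit reproduces $\Gamma$), which closes the argument without any interchange.
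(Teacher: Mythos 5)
Your proposal is correct and follows, in substance, the same route as the paper's proof: chain Theorem~\ref{thm:paraMMDD} (the data-driven limit $t_i \to \infty$ at fixed $N$) with Theorem~\ref{thm:convergeBasisFuns} (the basis limit $N \to M$). The difference is one of rigour rather than strategy. The paper's proof is two lines long and silently pushes the $t_i$-limit through the inner $N$-limit, i.e.\ it performs exactly the interchange of iterated limits that you single out as the main obstacle, without comment. Your explicit decomposition into the truncation error $\overline{C\Pi}(p)-\widetilde{C\Pi}_N(p)$ and the steady-state residual $\Phi_N(p)\bigl(\widetilde\Gamma_N - \widetilde\Gamma_{i,N}\bigr)$, combined with the identification of the latter with $(U_{i,K}^\top U_{i,K})^{-1}U_{i,K}^\top E_{i,K}$ from the proof of Theorem~\ref{thm:paraMMDD}, makes the structure transparent, and your $N=M$ shortcut disposes of the finite-$M$ case cleanly without any interchange. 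One caveat: your uniformity claim for the case $M=\infty$ is itself not fully established --- boundedness of the basis functions and invertibility of $U_{i,K}^\top U_{i,K}$ hold for each fixed $N$, but as $N \to M$ the dimensions grow (and Assumption~\ref{ass:unisolvantCondition} forces $K \ge N$, so the parameter samples must grow too), so a bound on $\Phi_N(p)(U_{i,K}^\top U_{i,K})^{-1}U_{i,K}^\top$ that is uniform in $N$ would require an additional argument, e.g.\ on the conditioning of $\Upsilon_K^\top\Upsilon_K$ as the basis is enlarged. This residual gap, however, is one the paper's own proof shares and does not even acknowledge, so your attempt is strictly more careful than the published argument.
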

\begin{proof}
    Under Assumptions \ref{ass:obserAndNoCommonEig}, \ref{ass:unisolvantCondition}, and \ref{ass:fullRankOmega}, Theorem \ref{thm:paraMMDD} yields
    \begin{equation*}
        \lim_{t_i \to \infty} \Bigl({\overline{C\Pi}}(p) - \lim_{N \to M}\widetilde{C\Pi}_{i,N}(p)\Bigr) = {\overline{C\Pi}}(p) - \lim_{N \to M}\widetilde{C\Pi}_{N}(p)
    \end{equation*}
    which, under the same assumptions, converges to $0$ by Theorem \ref{thm:convergeBasisFuns}.
\end{proof}

With Definition \ref{def:EstimatedparaMoment}, the family of parametric {reduced-order model}s that match the approximate parametric moment of {system} \eqref{eq:sys} at $(S,L)$ {on} $\mathcal{P}$ is given by
\begin{equation}\label{eq:appParaROMDD}
    \begin{aligned}
        \Dot{\xi}{(t,p)} &= (S - {G(p)}L)\xi{(t,p)} + {G(p)}u(t),\\
        \psi{(t,p)} &= \widetilde{C\Pi}_{i,N}(p)\xi{(t,p)},
    \end{aligned}
\end{equation}
with ${G(p)}$ any matrix such that $\sigma(S - {G(p)}L) \cap \sigma(S) = \emptyset$ and $\nu < n$.

\subsection{Illustration of the Linear Results}\label{sec:linearExample}
    We illustrate the results developed so far using a parametric benchmark model \cite{morwiki_synth_pmodel}. This parametric model is described by the equations
    \begin{equation}
    \begin{aligned}
        \Dot{x}{(t,p)} &= (A_0 + p A_1)x{(t,p)} + Bu(t),\\
        y{(t,p)} &= Cx{(t,p)},\label{eq:syntheticModel}
    \end{aligned}
    \end{equation}
    where $A_0 \in \mathbb{R}^{n \times n}$, $A_1 \in \mathbb{R}^{n \times n}$, $B \in \mathbb{R}^{n}$, and $C \in \mathbb{R}^{1 \times n}$. In this example, we consider the same setting as that in \cite[Chapter~9]{baur2017comparison}, namely,
    \begin{align*}
        A_0 &= 
        \begin{bmatrix}
            A_{0,1} \\
            & \ddots \\
            & & A_{0,k}
        \end{bmatrix}, & A_1 &= 
        \begin{bmatrix}
            A_{1,1}  \\
            & \ddots \\
            & & A_{1,k}
        \end{bmatrix}, \\
        B &=
        \begin{bmatrix}
            B_1 & \dots & B_k
        \end{bmatrix}^\top, & C &= 
        \begin{bmatrix}
            C_1 & \dots & C_k
        \end{bmatrix},
    \end{align*}
    where
    \begin{equation*}
        A_{0,i} = \begin{bmatrix}
            0 & b_i \\
            -b_i & 0
        \end{bmatrix}, A_{1,i} = \begin{bmatrix}
            a_i & 0 \\
            0 & a_i
        \end{bmatrix}, B_i^\top = \begin{bmatrix}
            2 \\ 0
        \end{bmatrix}, C_i^\top = \begin{bmatrix}
            1 \\ 0
        \end{bmatrix},
    \end{equation*}
    for $i = 1, \dots, k$ with $k = 500$ and, hence, $n = 1000$. The coefficients $a_i$ ($b_i$, respectively) are equidistantly spaced in the interval $[-10^3, -10]$ ($[10, 10^3]$, respectively). The parameter space $\mathcal{P}$ is set as $[0.1, 1]$, and the expansion point is selected as $p = 0.55$ which is the middle point of the parameter space.

    To illustrate the results in Section~\ref{sec:MatrixBased}, we select $50$ distinct interpolation points (and their conjugates), sampled equidistantly on a logarithmic scale from $10^0\iota$ to $10^{3.1}\iota$, to reduce system \eqref{eq:syntheticModel} to a model of dimension $\nu = 100$, as done in~\cite[Chapter~9]{baur2017comparison}. The matrix $S$ is constructed in the real Jordan form and $L$ is constructed such that $(S, L)$ is observable. 
    The mapping $G(\cdot)$ is given by~\eqref{eq:def_G} using the approximations $\widehat{\Pi}_N(\cdot)$ and $\widehat{\mathcal{X}}_N(\cdot)$, determined by~\eqref{eq:nestedSylvesterEq} and~\eqref{eq:nestedLyapunovEq}, respectively, with $N = 4$ and $Q = I_n$.
    Although in~\eqref{eq:def_G}, $\Pi^\top(p)\mathcal{X}(p)\Pi(p)$ is theoretically invertible on $\mathcal{P}$, in this example this quantity is ill-conditioned for some $p\in\mathcal{P}$.
    Therefore, we add a regularisation term, replacing $\Pi^\top(p)\mathcal{X}(p)\Pi(p)$ in~\eqref{eq:def_G} with $\Pi^\top(p)\mathcal{X}(p)\Pi(p) +\epsilon I$ where $\epsilon = 10^{-14}$.
    We obtain the $N$-th approximate parametric moment $\widehat{C\Pi}_{N}(\cdot)$ for different values of $N$ by following the computations outlined in Theorem \ref{thm:approximatePi}. 
    The stability-preserving parametric reduced-order model that achieves parametric moment matching at $(S, L)$ on $\mathcal{P}$ is then given by \eqref{eq:appParaROM}.

    To illustrate the results in Section~\ref{sec:linearDataDriven}, and to show the flexibility and effectiveness of the proposed data-driven approach\footnote{Note that since Theorem \ref{thm:paraMMDD} is a data-driven extension of Theorem \ref{thm:paraMMBasisFuns}, this example covers both approaches.
    }, we select only $16$ interpolation points (and their conjugates), sampled equidistantly on a logarithmic scale from $10^0\iota$ to $10^{3.1}\iota$, reducing system~\eqref{eq:syntheticModel} to a model of order $\nu = 32$. The matrix $S$ is constructed using the real Jordan form and $L$ is constructed such that $(S, L)$ is observable.
    To further show the advantages of this approach, we select $G(\cdot) = \bar G$ constant such that the eigenvalues of the reduced-order model~\eqref{eq:appParaROMDD} can be trivially located in the open complex left half-plane by the observability of the pair $(S, L)$. 
    We compute the approximate weight matrix $\widetilde{\Gamma}_{i,N}$ by following Theorem~\ref{thm:paraMMDD}.
    The moving window $T_i^h$ is selected to start from $t_{i-h+1} = 17.38$~s and end with $t_i = 20 $~s, containing $64$ equidistant time instants in between (\ie, $h = 64$). The set $\{p_k\}_{k=1}^K$ comprises points equally sampled on $\mathcal{P} = [0.1, 1]$ with $K = 10$.
    We use polynomial basis functions $\varphi_j(p)=p^{(j-1)}$, for $j = 1, \dots, N$, to obtain the interpolation matrix $\Phi_N(\cdot)$.
    The data-driven $N$-th approximate parametric moment via basis functions $\widetilde{C\Pi}_{i,N}(\cdot)$ is computed by following Definition~\ref{def:EstimatedparaMoment} and then the resulting parametric reduced-order model~\eqref{eq:appParaROMDD} is constructed.

    Fig.~\ref{fig:CPIpDiffNumTerms} shows the relative ${\ell_2}$-norm errors\footnote{This is obtained by solving $\overline{C\Pi}(p^*)$ through \eqref{eq:SylvesterEq} and substituting $p^*$ into $\widehat{C\Pi}_N(p^*)$ or $\widetilde{C\Pi}_{i,N}(p^*)$ for each $p^*$ densely sampled across $\mathcal{P}$.} between $\overline{C\Pi}(\cdot)$ and $\widehat{C\Pi}_N(\cdot)$ (top) and $\overline{C\Pi}(\cdot)$ and $\widetilde{C\Pi}_{i,N}(\cdot)$ (bottom) over $\mathcal{P}$ for different selected values of $N$. For the series expansion method, the approximation is very accurate around the expansion point $p = 0.55$, whilst the deviation increases when $|p-0.55|$ grows. For the basis function method, the approximation is uniformly good (with relative ${\ell}_2$-norm error less than $8 \times 10^{-3}$) across the parameter space. In both graphs, as the number of retained series expansion terms or basis functions increases, the relative ${\ell}_2$-norm error decreases.

    Fig.~\ref{fig:BodeP} depicts the magnitude of the frequency responses of the benchmark model (solid/blue line), of the reduced-order model via series expansion with $\widehat{C\Pi}_4(\cdot)$ (dashed/red line), and of the reduced-order model via basis functions with $\widetilde{C\Pi}_{i,6}(\cdot)$ (dotted/green line) at $p=0.1$ (top left), $p=0.2$ (top right), $p=0.4$ (bottom left), and $p=1.0$ (bottom right). Circles\footnote{We plot $13$ out of $50$ moments here to maintain visual clarity.} and diamonds indicate the moments matched at the desired interpolation frequencies for the two reduced-order models, respectively.  
    We observe that the frequency response of both reduced-order models is almost indistinguishable from that of the original model, 
    which indicates that the behaviour of system~\eqref{eq:syntheticModel} is well approximated by the resulting reduced-order models. 

    To compare the two preceding reduced-order models, which were obtained in different settings, Fig.~\ref{fig:SE&BF} shows the relative $\mathcal{H}_2$-norm error (as defined in \cite[Chapter~9]{baur2017comparison}) between the original benchmark model and the resulting reduced-order models for $\widehat{C\Pi}_4(\cdot)$ (dashed/red line) and $\widetilde{C\Pi}_{i, 6}(\cdot)$ (dotted/green line), respectively. 
    The figure shows that the reduced-order model with $\widetilde{C\Pi}_{i,6}(\cdot)$ approximates the original system with relatively small errors across the entire parameter space, which is remarkable given its smaller order, \ie, order $32$ versus order $100$ of the first reduced-order model characterised by $\widehat{C\Pi}_4(\cdot)$. However, this first model provides an improved approximation for values of $p$ that are close to the expansion point $p = 0.55$ (approximately in the range $[0.35, 0.75]$).

    \begin{figure}[!t]
    \centerline{\includegraphics[width=\columnwidth]{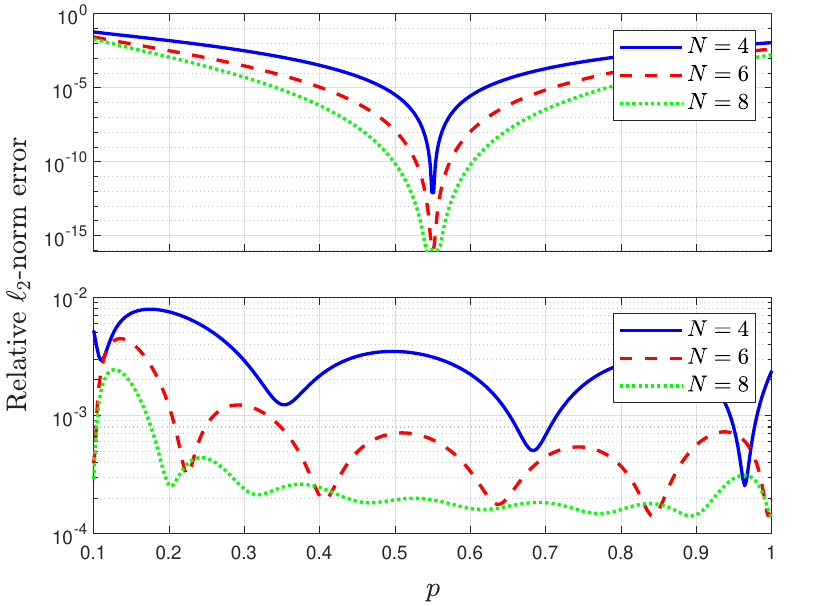}}
    \caption{The relative ${\ell}_2$-norm error, versus the parametric moment ${\overline{C\Pi}}(\cdot)$, of the $N$-th approximate parametric moment $\widehat{C\Pi}_N(\cdot)$ (top) and $\widetilde{C\Pi}_{i, N}(\cdot)$ (bottom), for different values of $N$.}
    \label{fig:CPIpDiffNumTerms}
    \end{figure}

    \begin{figure}[!t]
    \centerline{\includegraphics[width=\columnwidth]{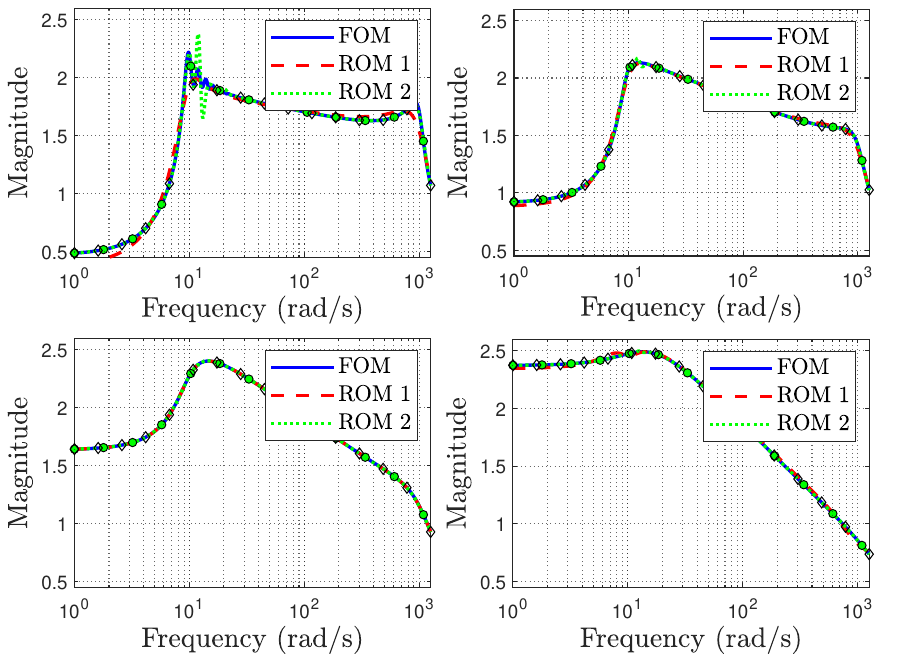}}
    \caption{Frequency responses (magnitude) of the benchmark model (solid/blue line), of the reduced-order model by series expansion with $\widehat{C\Pi}_4(\cdot)$ (dashed/red line), and of the reduced-order model via basis functions with $\widetilde{C\Pi}_{i,6}(\cdot)$ (dotted/green line) at $p=0.1$ (top left), $p=0.2$ (top right), $p=0.4$ (bottom left), and $p=1.0$ (bottom right). Circles and diamonds indicate the moments matched at the desired interpolation frequencies for the two reduced-order models, respectively.}
    \label{fig:BodeP}
    \end{figure}

    \begin{figure}[!t]
    \centerline{\includegraphics[width=\columnwidth]{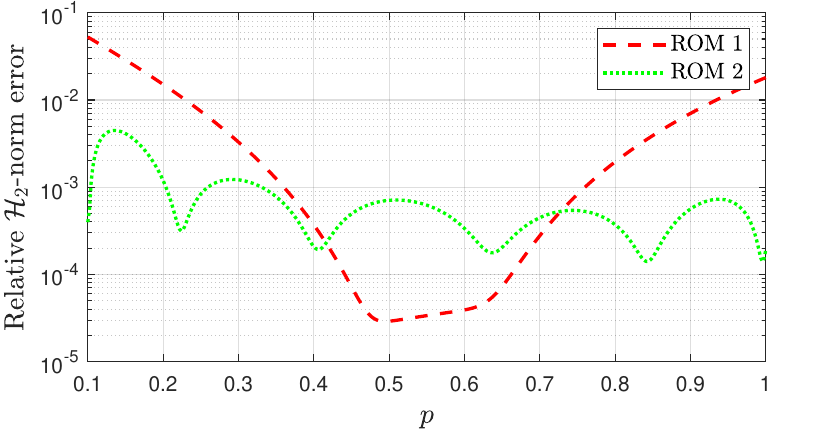}}
    \caption{The relative $\mathcal{H}_2$-norm error between the original benchmark model and the resulting reduced-order models for $\widehat{C\Pi}_4(\cdot)$ (dashed/red line) and $\widetilde{C\Pi}_{i, 6}(\cdot)$ (dotted/green line), respectively.}
    \label{fig:SE&BF}
    \end{figure}

In summary, we have proposed two different approaches, namely series expansion and basis functions, for approximating the parametric moment. On the one hand, the approach via series expansion typically provides an accurate approximation of the parametric moment in a neighbourhood of the expansion point and thus better suits a parameter space $\mathcal{P}$ with a small range; on the other hand, the approach via basis functions typically provides an accurate approximation of the parametric moment across the entire parameter space $\mathcal{P}$, with larger errors around the nominal value, but smaller errors at the edges of $\mathcal{P}$. The approach via basis functions has then been extended to a data-driven setting, which allows using the method even when knowledge of the high-order parametric system is limited.

\section{Moment Matching for Nonlinear Parametric Systems}\label{sec:MMforNonlinearParametricSys}

In this section, we first recall the notion of moment and moment matching for nonlinear non-parametric systems. Analogously to the linear case, we extend these notions to nonlinear parametric systems in Section~\ref{sec:notionNonlinearMoment} by introducing the notion of nonlinear parametric moment. Then, we propose a data-driven approach to approximate the nonlinear parametric moment in Section~\ref{sec:nonDataDriven}. Finally, in Section~\ref{sec:nonlinearExample}, we illustrate the results of this section using a wind farm model.

\subsection{Notion of Moment and Moment Matching} \label{sec:notionNonlinearMoment}
Consider now a nonlinear, SISO, continuous-time parametric system described by
\begin{equation} 
    \begin{aligned}
        \Dot{x}(t,p) &= f(x(t, p), u(t),p),\\
        y(t,p) &= h(x(t, p),p),\label{eq:nonlinearsys}
    \end{aligned}
\end{equation}
with parameter $p \in \mathcal{P} \subseteq \mathbb{R}$, state $x(t,p) \in \mathbb{R}^{n}$, input $u(t) \in \mathbb{R}$, output $y(t,p) \in \mathbb{R}$, $f: \mathbb{R}^{n} \times \mathbb{R} \times \mathcal{P}  \to \mathbb{R}^{n}$ and $h: \mathbb{R}^{n} \times \mathcal{{P}} \to \mathbb{R}$ smooth mappings parametrised by $p$ with $f(0, 0, p) = 0$, $h(0, p) = 0$ for all $p \in \mathcal{P}$, and consider a signal generator
\begin{equation} \label{eq:nonlinearSignalGenerator}
    \dot{\omega}(t) = s(\omega(t)), \quad u(t) = l(\omega(t)), 
\end{equation}
where $\omega(t) \in \mathbb{R}^\nu$, $s: \mathbb{R}^{\nu} \to \mathbb{R}^{\nu}$ and $l: \mathbb{R}^{\nu} \to \mathbb{R}$ are smooth mappings with $s(0) = 0$ and $l(0) = 0$. The interconnection of the signal generator \eqref{eq:nonlinearSignalGenerator} and the nonlinear parametric system~\eqref{eq:nonlinearsys} is described by
\begin{equation}
    \begin{aligned}
        \dot{\omega}(t) &= s(\omega(t)), \\
        \Dot{x}(t,p) &= f(x(t, p), l(\omega(t)),p),\\
        y(t,p) &= h(x(t, p),p).\label{eq:nonlinearInterconnectedSys}
    \end{aligned}
\end{equation}
We define the notion of parametric moment for system \eqref{eq:nonlinearsys} by modifying the notion of moment for nonlinear systems proposed in \cite{scarciotti2017nonlinear}, which we recall in the following for one fixed parameter $p=p^*$ after introducing essential assumptions.

\begin{assumption}\label{ass:nonlinearPiExist}
    There is a unique mapping $\pi({\omega}, p^*)$, locally defined in a neighbourhood $\mathcal{N}$ of ${\omega} = 0$, with $\pi(0, p^*) = 0$, which solves the partial differential equation (PDE)
    \begin{equation}\label{eq:nonlinearPDE}
        \frac{\partial \pi({\omega},p^*)}{\partial {\omega}}s({\omega}) = f(\pi({\omega}, p^*), l({\omega}), p^*). 
    \end{equation}
\end{assumption}

\begin{assumption}\label{ass:nonlinearSGObservable}
    The signal generator \eqref{eq:nonlinearSignalGenerator} is observable\footnote{See \cite{scarciotti2017data} for the definition of observable systems.}.
\end{assumption}

The combination of Assumptions \ref{ass:nonlinearPiExist} and \ref{ass:nonlinearSGObservable} is a nonlinear counterpart of Assumption \ref{ass:obserAndNoCommonEig}.

\begin{definition}[see \cite{scarciotti2017nonlinear}]
    Consider system \eqref{eq:nonlinearsys} and the signal generator \eqref{eq:nonlinearSignalGenerator}. Suppose Assumptions \ref{ass:nonlinearPiExist} and \ref{ass:nonlinearSGObservable} hold. The mapping $h \circ \pi$ is the (nonlinear) moment of system \eqref{eq:nonlinearsys} at $(s, l)$.
\end{definition}

Analogously to the linear case, we define the parametric moment for system \eqref{eq:nonlinearsys} by considering the dependence on the parameter $p$.

\begin{definition}\label{def:nonlinearParaMoment}
    We call the mapping $\kappa(\omega, p) := h \circ \pi(\omega, p)$ the \textit{(nonlinear) parametric moment} of system~\eqref{eq:nonlinearsys} at $(s, l)$ on $\mathcal{P}$, where $\pi: \mathcal{N} \times \mathcal{P} \to \mathbb{R}^n$ is a unique mapping solving \eqref{eq:nonlinearPDE} for all $p \in \mathcal{P}$.
\end{definition}

Note that if Assumptions \ref{ass:nonlinearPiExist} and \ref{ass:nonlinearSGObservable} hold for all $p \in \mathcal{P}$, then $h \circ \pi$ exists and is unique for each $p \in \mathcal{P}$. Consequently, under the same assumptions, the nonlinear version of the parametric moment is well defined.

With Definition \ref{def:nonlinearParaMoment}, a family of parametric reduced-order models that match the parametric moment of system \eqref{eq:nonlinearsys} at $(s,l)$ on $\mathcal{P}$ are given by \cite{scarciotti2017nonlinear}
\begin{equation}\label{eq:nonlinearROM}
    \begin{aligned}
        \dot{\xi}(t, p) &= s(\xi) - \delta(\xi, p)l(\xi) + \delta(\xi, p) u(t), \\
        \psi(t,p) &= \kappa(\xi, p),
    \end{aligned}
\end{equation}
where $\delta$ is any mapping such that the equation 
\begin{multline*}
    \frac{\partial q(\omega,p)}{\partial \omega} s(\omega)=s(q(\omega,p)) - \delta(q(\omega,p), p) l(q(\omega,p)) \\+ \delta(q(\omega,p), p) l(\omega)
\end{multline*}
has the unique solution $q(\omega,p)=\omega$ for all $p \in \mathcal{P}$.

\subsection{Parametric Moment via Basis Functions} \label{sec:nonDataDriven}
Similar to the linear case, determining the parametric moment of system \eqref{eq:nonlinearsys} requires solving equation \eqref{eq:nonlinearPDE} for all $p \in \mathcal{P}$, which is challenging. A possible solution consists in extending the ideas of Section~\ref{sec:MatrixBased} by expanding $f$, $\pi$, $s$, $l$, and so the PDE~\eqref{eq:nonlinearPDE}, with respect not only to $p$, but also to $x$ and $\omega$. Then one can solve the resulting infinitely many generalised Sylvester equations in both $p$ and $\omega$. However, this method would be computationally demanding, in addition to being accurate only around the expansion point $(p^*, \omega^*)$. In this section, we instead develop a nonlinear enhancement of the results in Section \ref{sec:linearDataDriven}. As the resulting method is data-driven, it also has the advantage of being usable when the mappings $f$, $g$, and $h$ are unknown. To this end, we begin with an assumption.

\begin{assumption} \label{ass:nonlinearBasisSpace}
    The mapping $\kappa$ is in the function space identified by the set of continuous bounded basis functions $\eta_j: \mathcal{N} \times \mathcal{P} \to \mathbb{R}$, with $j = 1, \dots, Z$ ($Z$ may be $\infty$), \ie{}, there exist constant coefficients $\theta_j \in \mathbb{R}$, with $j = 1, \dots, Z$, such that $\kappa(\omega,p) = \sum_{j = 1}^Z \theta_j \eta_j({\omega},p)$ for any pair $({\omega},p)$.
\end{assumption}

Let
\begin{equation}\label{eq:nonAppMMMatrixForm}
    \begin{aligned}
        \widetilde{\Theta}_N & :=
        \begin{bmatrix}
            \widetilde\theta_1 & \dots & \widetilde\theta_{N}
        \end{bmatrix}^\top,\\
        H_{N}({\omega},p) & :=
        \begin{bmatrix}
            \eta_1({\omega},p) & \dots & \eta_{N}({\omega},p)
        \end{bmatrix},
    \end{aligned}
\end{equation}
with $N \leq Z$. By Assumption \ref{ass:nonlinearBasisSpace}, the parametric moment of system \eqref{eq:nonlinearsys} can be described by a weighted sum of basis functions
\begin{equation}\label{eq:nonParaMomentLeastSquare}
    \begin{aligned}
        \kappa(\omega,p) &= \sum_{j = 1}^{N} \widetilde\theta_j \eta_j({\omega},p) + \widetilde e({\omega},p)\\
        &= H_{N}({\omega},p) \widetilde{\Theta}_{N} + \widetilde e({\omega},p),
    \end{aligned}
\end{equation}
where $\widetilde e(\omega, p)$ is the least-squares error resulting by projecting $\kappa$ into a function space identified by $N < Z$ basis functions. Analogously to the linear case, we define an approximation of the nonlinear parametric moment by neglecting the least-squares error $\widetilde e(\omega, p)$.

\begin{definition}\label{def:nonAppParaMMBF}
    We call $\widetilde\kappa_{N}(\omega,p) := H_{N}({\omega},p)\widetilde \Theta_{N}$, with $\widetilde \Theta_{N}$ and $H_{N}({\omega},p)$ defined in~\eqref{eq:nonAppMMMatrixForm}, the \textit{$N$-th approximate (nonlinear) parametric moment} (via basis functions) of system~\eqref{eq:nonlinearsys} at $(s,l)$ on $\mathcal{P}$.
\end{definition}

Before proposing a data-driven method to estimate the weight vector $\widetilde \Theta_{N}$, an additional assumption (which is the nonlinear version of Assumption \ref{ass:fullRankOmega}) is introduced for system \eqref{eq:nonlinearsys} and the signal generator \eqref{eq:nonlinearSignalGenerator} to connect the notion of moment with the steady-state output response of the interconnected system \eqref{eq:nonlinearInterconnectedSys}.

\begin{assumption}\label{ass:nonStableInterconnection}
    System \eqref{eq:nonlinearsys} is minimal and locally exponentially stable at the zero equilibrium point for any possible value of $p \in \mathcal{P}$. The signal generator \eqref{eq:nonlinearSignalGenerator} is neutrally stable\footnote{See \cite[Chapter~8]{isidori1995nonlinear} for the definition of neutrally stable systems.} and satisfies the excitation rank condition\footnote{See \cite{padoan2017geometric} for the definition of the excitation rank condition.} at $\omega(0)$. In addition, the initial condition $\omega(0)$ of the signal generator is almost periodic\footnote{See \cite{padoan2017geometric} for the definition of an almost periodic point.} and all the solutions of the signal generator are analytic.
\end{assumption}

\begin{lemma}\label{lm:nonMoment2SteadyState}
    Consider system~\eqref{eq:nonlinearsys} and the signal generator~\eqref{eq:nonlinearSignalGenerator}. Suppose Assumptions \ref{ass:nonlinearSGObservable} and \ref{ass:nonStableInterconnection} hold. Then Assumption \ref{ass:nonlinearPiExist} holds for all $p \in \mathcal{P}$. In addition, the steady-state output response of the interconnected system \eqref{eq:nonlinearInterconnectedSys} is $y_{ss}(t, p) = h(\pi({\omega}, p),p)$ for any $x(0)$ and $\omega(0)$ sufficiently small.
\end{lemma}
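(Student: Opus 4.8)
The plan is to establish the two claims separately and for a fixed but arbitrary $p\in\PP$, since neither conclusion couples distinct parameter values; the clause ``for all $p\in\PP$'' then follows at once because Assumption~\ref{ass:nonStableInterconnection} is postulated uniformly on $\PP$. For the first claim, I would invoke the center-manifold machinery underlying the nonlinear moment framework of \cite{scarciotti2017nonlinear} (see also \cite{isidori1995nonlinear}). Fixing $p$, the interconnected system~\eqref{eq:nonlinearInterconnectedSys} is a cascade whose $\omega$-dynamics are neutrally stable (spectrum of $\partial s/\partial\omega|_{0}$ on the imaginary axis) and whose $x$-dynamics, by local exponential stability of~\eqref{eq:nonlinearsys}, have $\partial f/\partial x|_{(0,0,p)}$ Hurwitz. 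The center-manifold theorem then yields a locally defined invariant manifold tangent to the $\omega$-axis, represented as the graph $x=\pi(\omega,p)$ with $\pi(0,p)=0$; writing the invariance condition for this graph reproduces exactly the PDE~\eqref{eq:nonlinearPDE}. Uniqueness of $\pi(\cdot,p)$ is pinned down by the analyticity of the signal-generator solutions and the fact that the manifold must carry the recurrent steady-state trajectories, following the limit-set characterisation of \cite{scarciotti2017nonlinear,padoan2017geometric}. This establishes Assumption~\ref{ass:nonlinearPiExist} for the chosen $p$.

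For the steady-state claim, I would introduce the manifold error $e(t,p):=x(t,p)-\pi(\omega(t),p)$ and differentiate:
\[
\dot e = f(\pi(\omega,p)+e,l(\omega),p)-\frac{\partial\pi(\omega,p)}{\partial\omega}s(\omega).
\]
Substituting the PDE~\eqref{eq:nonlinearPDE} cancels the second term against $f(\pi(\omega,p),l(\omega),p)$, leaving an error system in $e$ whose linearisation at $e=0$ coincides, to leading order, with $\partial f/\partial x|_{(0,0,p)}$, hence is Hurwitz. Consequently, for $x(0)$ and $\omega(0)$ sufficiently small the manifold is locally attractive and $e(t,p)\to 0$ exponentially, so $x(t,p)\to\pi(\omega(t),p)$ and the output obeys $y(t,p)=h(x(t,p),p)\to h(\pi(\omega(t),p),p)$. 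Identifying this limiting behaviour with the steady-state response gives $y_{ss}(t,p)=h(\pi(\omega,p),p)$.

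The delicate point, and the step I would treat most carefully, is not the exponential attractivity itself but the existence of a well-defined steady-state response to which $y(t,p)$ actually converges. This is where the remaining hypotheses of Assumption~\ref{ass:nonStableInterconnection} enter: neutral stability together with the almost periodicity of $\omega(0)$ and the analyticity of the signal-generator solutions guarantee that $\omega(t)$ is a bounded, recurrent, non-decaying signal, so that the attractive trajectory on the manifold is a genuine steady state in the sense of the limit-set framework of \cite{scarciotti2017nonlinear,padoan2017geometric}, rather than a vanishing transient. The observability of the signal generator (Assumption~\ref{ass:nonlinearSGObservable}) and the minimality of~\eqref{eq:nonlinearsys} ensure the construction is non-degenerate. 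Since $p\in\PP$ was arbitrary, both conclusions hold for all $p\in\PP$.
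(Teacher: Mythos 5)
Your proof is correct and is in substance the same as the paper's: the paper disposes of this lemma in a single line, citing Lemma~2.1 of \cite{scarciotti2017nonlinear} applied to each fixed $p \in \mathcal{P}$, and your argument---centre-manifold existence and invariance giving the PDE~\eqref{eq:nonlinearPDE}, exponential attractivity of the manifold from the Hurwitz linearisation, and identification of the steady state via the recurrence/almost-periodicity hypotheses---is precisely the content of that cited lemma, applied parameter-wise. The only difference is that you unpack the machinery the paper outsources to the citation.
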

\begin{proof}
    This is a straightforward consequence of~\cite[Lemma 2.1]{scarciotti2017nonlinear} applied to each $p \in \mathcal{P}$.
\end{proof}

Lemma \ref{lm:nonMoment2SteadyState} shows that the parametric moment of system~\eqref{eq:nonlinearsys} can be characterised by the steady-state output response of the interconnected system \eqref{eq:nonlinearInterconnectedSys}. This observation opens the possibility of approximating the parametric moment from samples of $\omega$ and $y$ collected at a set of distinct values of $p$. To ensure the uniqueness of the estimated weight vector, the following nonlinear counterpart of Assumption \ref{ass:unisolvantCondition} is introduced.
\begin{assumption} \label{ass:unisolvantConditionNonlinear}
    The set of basis functions $\{\eta_1, \eta_2, \hdots, \eta_N\}$ are \textit{unisolvent}, \ie{}, the only solution to 
    $$
    \sum_{j=1}^N c_j \eta_j(\omega(t_i), p_k) = 0
    $$
    for all $k = 1, \hdots, K$ and $i = 1, \hdots, h$, is the trivial solution $c_1 = c_2 = \hdots = c_N = 0$ for any $\{\omega(t_i), p_k\}_{i=1, k=1}^{h, K} \subset \mathcal{N} \times \mathcal{P}$.
\end{assumption}

We are now ready to provide a data-driven method that can be used to approximate the nonlinear parametric moment.

\begin{theorem}\label{thm:nonlinearDDWeightVector}
    Suppose Assumptions \ref{ass:nonlinearSGObservable}, \ref{ass:nonlinearBasisSpace}, \ref{ass:nonStableInterconnection}, and \ref{ass:unisolvantConditionNonlinear} hold. Let $T_i^h$ be the time window at which the signal ${\omega}$ and the output response $y(t,p)$ of the interconnected system \eqref{eq:nonlinearInterconnectedSys} are evaluated on the simple set $\{p_k\}_{k=1}^K$. Define the time-snapshots $R_{i,K} \in \mathbb{R}^{hK\times N}$ and $M_{i,K} \in \mathbb{R}^{hK}$, with $h \geq \nu$ and $hK \geq N$, as
    \begin{equation*}
        R_{i,K} \coloneqq
        \begin{bmatrix}
            R_i(p_1)\\
            \vdots \\
           R_i(p_K)
        \end{bmatrix}, \quad
        M_{i,K} \coloneqq
        \begin{bmatrix}
            M_i(p_1)\\
            \vdots \\
            M_i(p_K)
        \end{bmatrix},
    \end{equation*}
    with
    \begin{equation*}
        R_i(p_k) \coloneqq 
        \begin{bmatrix}
            H_{N}(\omega(t_{i-h+1}),p_k)^\top & \dots & H_{N}(\omega(t_{i}),p_k)^\top
        \end{bmatrix}^\top
    \end{equation*}
    and 
    \begin{equation*}
        M_i(p_k) \coloneqq 
        \begin{bmatrix}
            y(t_{i-h+1}, p_k) & \dots & y(t_{i}, p_k)
        \end{bmatrix}^\top.
    \end{equation*}
    If $R_{i,K}$ has full column rank, then
    \begin{equation}\label{eq:nonApproximateTL}
        \VEC(\widetilde \Theta_{i,N}) = (R_{i,K}^\top R_{i,K})^{-1}R_{i,K}^\top M_{i,K}
    \end{equation}
    is an approximation of the weight vector $\widetilde \Theta_{N}$, namely there exists a sequence $\{t_i\}$ such that $\lim_{t_i \to \infty}\widetilde \Theta_{i,N} = \widetilde \Theta_{N}$.
\end{theorem}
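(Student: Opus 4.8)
The plan is to mirror the structure of the proof of Theorem~\ref{thm:paraMMDD}, since this is its nonlinear counterpart. First I would invoke Lemma~\ref{lm:nonMoment2SteadyState}, which under Assumptions~\ref{ass:nonlinearSGObservable} and~\ref{ass:nonStableInterconnection} guarantees that the steady-state output response of the interconnected system~\eqref{eq:nonlinearInterconnectedSys} is $y_{ss}(t,p) = h(\pi(\omega,p),p) = \kappa(\omega,p)$. Thus, away from the transient, the measured output $y(t,p)$ equals $\kappa(\omega(t),p)$ plus an exponentially decaying error term $\widetilde\epsilon(t,p)$, in analogy with~\eqref{eq:steadyStateResponse}. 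Substituting the basis-function representation~\eqref{eq:nonParaMomentLeastSquare} of $\kappa$, I would write, for each sample $(\omega(t_j),p_k)$, the relation $y(t_j,p_k) = H_N(\omega(t_j),p_k)\widetilde\Theta_N + \widetilde e(\omega(t_j),p_k) + \widetilde\epsilon(t_j,p_k)$.

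Next I would stack these scalar equations over the moving window $T_i^h$ and the sampling set $\{p_k\}_{k=1}^K$. By the definitions of $R_{i,K}$ and $M_{i,K}$ given in the statement, this stacking yields the matrix identity $M_{i,K} = R_{i,K}\widetilde\Theta_N + \widetilde E_{i,K}$, where $\widetilde E_{i,K}$ collects the decaying transient contributions (and, in principle, the projection residuals $\widetilde e$). Here I note that because $\widetilde\Theta_N$ is already a vector (unlike the matrix $\widetilde\Gamma_N$ in the linear case), the Kronecker-product manipulations of Theorem~\ref{thm:paraMMDD} collapse: the regressor $R_{i,K}$ directly plays the role of $U_{i,K}$, and no separate vectorisation/mixed-product identities are needed. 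Assuming $R_{i,K}$ has full column rank, the normal equations give the least-squares solution $\VEC(\widetilde\Theta_{i,N}) = (R_{i,K}^\top R_{i,K})^{-1} R_{i,K}^\top M_{i,K}$ as in~\eqref{eq:nonApproximateTL}.

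Finally, to establish the asymptotic claim, I would substitute $M_{i,K} = R_{i,K}\widetilde\Theta_N + \widetilde E_{i,K}$ into~\eqref{eq:nonApproximateTL}, obtaining
\begin{equation*}
    \VEC(\widetilde\Theta_{i,N}) = \widetilde\Theta_N + (R_{i,K}^\top R_{i,K})^{-1} R_{i,K}^\top \widetilde E_{i,K}.
\end{equation*}
Then I would argue that each entry of $\widetilde E_{i,K}$ tends to zero as $t_i\to\infty$, because the transient $\widetilde\epsilon$ decays exponentially by the local exponential stability in Assumption~\ref{ass:nonStableInterconnection}, while the entries of $R_{i,K}$ remain bounded (the basis functions $\eta_j$ are bounded by Assumption~\ref{ass:nonlinearBasisSpace} and $\omega(t)$ is bounded by neutral stability) and $R_{i,K}^\top R_{i,K}$ is invertible by the full-column-rank hypothesis. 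Hence the error term vanishes and $\lim_{t_i\to\infty}\widetilde\Theta_{i,N} = \widetilde\Theta_N$.

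The main obstacle I anticipate is the treatment of the transient error $\widetilde\epsilon(t,p)$ in the nonlinear setting. Unlike the linear case, where $\epsilon(t,p) = e^{A(p)t}(x(0)-\Pi(p)\omega(0))$ has an explicit exponentially decaying form, here the convergence to the invariant manifold relies on the local exponential stability asserted by Lemma~\ref{lm:nonMoment2SteadyState} and is only valid for sufficiently small $x(0)$ and $\omega(0)$; care is needed to ensure that the decay is uniform over the finite sampling set $\{p_k\}_{k=1}^K$ so that a single sequence $\{t_i\}$ drives all entries of $\widetilde E_{i,K}$ to zero simultaneously. A subtle related point is that the full-column-rank condition on $R_{i,K}$ is here a hypothesis rather than a consequence of a clean Kronecker factorisation as in Remark~\ref{re:fullCollumnRank}, so I would not attempt to derive it from the unisolvency assumption but simply use it as given.
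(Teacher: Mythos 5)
Your overall route is indeed the paper's: invoke Lemma~\ref{lm:nonMoment2SteadyState} to write $y(t,p)=\kappa(\omega,p)+\widetilde\epsilon(t,p)$, substitute the basis expansion~\eqref{eq:nonParaMomentLeastSquare}, stack over $T_i^h$ and $\{p_k\}_{k=1}^K$, solve the normal equations, and let the transient decay. The gap is in your error bookkeeping. You fold the projection residuals $\widetilde e(\omega(t_j),p_k)$ into $\widetilde E_{i,K}$ and then claim that every entry of $\widetilde E_{i,K}$ tends to zero as $t_i\to\infty$. That claim fails for the residual part: $\widetilde e$ is a structural truncation error (only $N<Z$ basis functions are retained), not a transient. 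Since the signal generator is neutrally stable and $\omega(0)$ is almost periodic (Assumption~\ref{ass:nonStableInterconnection}), $\omega(t)$ does not converge, so $\widetilde e(\omega(t_i),p_k)$ persists for all time. With your decomposition, substituting back into~\eqref{eq:nonApproximateTL} gives
\begin{equation*}
\VEC(\widetilde\Theta_{i,N}) = \VEC(\widetilde\Theta_N) + (R_{i,K}^\top R_{i,K})^{-1}R_{i,K}^\top\bigl(D_{i,K}+\widetilde E_{i,K}\bigr),
\end{equation*}
where $D_{i,K}$ collects the $\widetilde e$ terms and $\widetilde E_{i,K}$ the transients; only the second contribution vanishes, so your argument yields $\lim_{t_i\to\infty}\widetilde\Theta_{i,N}=\widetilde\Theta_N + \lim_{t_i\to\infty}(R_{i,K}^\top R_{i,K})^{-1}R_{i,K}^\top D_{i,K}$, which is not the claimed limit.

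The paper avoids this by keeping the two error sources separate: it writes $M_{i,K} = R_{i,K}\widetilde\Theta_N + D_{i,K} + \widetilde E_{i,K}$ with $\widetilde E_{i,K}$ containing only the transients, and characterises $\widetilde\Theta_N$ as the solution of the least-squares problem on the transient-free data, with $D_{i,K}$ the truncated least-squares error of that fit. By the normal equations this residual is projected out, giving exactly $\VEC(\widetilde\Theta_N) = (R_{i,K}^\top R_{i,K})^{-1}R_{i,K}^\top(M_{i,K}-\widetilde E_{i,K})$, after which only the transient term has to decay --- which is your (correct) exponential-decay argument using boundedness of the basis functions and invertibility of $R_{i,K}^\top R_{i,K}$. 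Your remaining observations --- that the Kronecker machinery of Theorem~\ref{thm:paraMMDD} collapses because $\widetilde\Theta_N$ is a vector, that uniformity of the decay over the finite set $\{p_k\}_{k=1}^K$ is needed, and that the rank condition should be taken as a hypothesis --- are all consistent with the paper. To repair your write-up, keep $D_{i,K}$ out of $\widetilde E_{i,K}$ and tie the target $\widetilde\Theta_N$ to the steady-state least-squares fit, as the paper does.
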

\begin{proof}
    Under Assumptions \ref{ass:nonlinearSGObservable} and \ref{ass:nonStableInterconnection}, Lemma \ref{lm:nonMoment2SteadyState} indicates that the output response of system \eqref{eq:nonlinearInterconnectedSys} can be written as
    \begin{equation}\label{eq:nonSSOutputResponse}
        y(t,p) = y_{ss}(t,p) + \widetilde\epsilon(t,p) = \kappa(\omega,p) + \widetilde\epsilon(t,p),
    \end{equation}
   where $\widetilde\epsilon(t,p)$ is an exponentially decaying signal in time. Substituting~\eqref{eq:nonParaMomentLeastSquare} into~\eqref{eq:nonSSOutputResponse} yields
    \begin{equation}\label{eq:nonSSOutputResponseDivided}
        y(t,p) = H_N({\omega},p)\widetilde \Theta_{N} + \widetilde e({\omega},p) + \widetilde\epsilon(t,p).
    \end{equation}
    Computing~\eqref{eq:nonSSOutputResponseDivided} at all elements of $T_i^h$ and $\{p_j\}_{j=1}^K$ yields
    \begin{equation}
        M_{i,K} = R_{i,K}\widetilde \Theta_{N} + \underbrace{\begin{bmatrix}
            \widetilde e(\omega(t_{i-h+1}),p_1)\\
            \vdots \\
            \widetilde e(\omega(t_{i}),p_1)\\
            \vdots \\
            \widetilde e(\omega(t_{i-h+1}),p_K)\\
            \vdots \\
            \widetilde e(\omega(t_{i}),p_K)\\
        \end{bmatrix}}_{\displaystyle D_{i,K}} 
        + \underbrace{\begin{bmatrix}
            \widetilde\epsilon(t_{i-h+1},p_1)\\
            \vdots \\
            \widetilde\epsilon(t_{i},p_1)\\
            \vdots \\
            \widetilde\epsilon(t_{i-h+1},p_K)\\
            \vdots \\
            \widetilde\epsilon(t_{i},p_K)
        \end{bmatrix}}_{\displaystyle \widetilde E_{i,K}}.
    \end{equation}
    As $R_{i,K}$ has full column rank by assumption, the weight vector $\widetilde \Theta_{N}$ is the solution of the least-squares problem
    \begin{equation}\label{eq:nonLeastSquare}
        \widetilde \Theta_{N} = \operatorname*{arg\,min}_{\Theta_x} \, (M_{i,K} - \widetilde E_{i,K} - R_{i,K} \Theta_x),
    \end{equation}
    with $D_{i,k}$ the truncated least-squares error. It follows from the full column rank of $R_{i,K}$ (implied by Assumption~\ref{ass:unisolvantConditionNonlinear} and the set $\{p_k\}_{k=1}^{K}$ being simple) that  
    $
        \VEC(\widetilde \Theta_{N}) = (R_{i,K}^\top R_{i,K})^{-1}R_{i,K}^\top (M_{i,K} - \widetilde E_{i,K}). 
    $
    Note that the error vector $\widetilde E_{i,K}$ decays to $0$ when $t_i$ goes to infinity, $R_{i,K}$ is bounded element-wise (by the boundedness of the basis functions in Assumption \ref{ass:nonlinearBasisSpace}) and $R_{i,K}^\top R_{i,K}$ is invertible (by the full column rank of $R_{i,K}$), hence we have
    \begin{equation*}
        \lim_{t_i \to \infty}(\widetilde \Theta_{i,N} - \widetilde \Theta_{N}) = \lim_{t_i \to \infty}(R_{i,K}^\top R_{i,K})^{-1}R_{i,K}^\top \widetilde E_{i,K} = 0,
    \end{equation*}
    which completes the proof.
\end{proof}

We are now ready to introduce the definition of the data-driven approximate nonlinear parametric moment of system~\eqref{eq:nonlinearsys}.
\begin{definition}\label{def:nonEstimatedparaMoment}
    We call $\widetilde\kappa_{i,N}(\omega,p) = H_{N}({\omega},p)\widetilde \Theta_{i,N}$, with $H_{N}({\omega},p)$ defined in~\eqref{eq:nonAppMMMatrixForm} and $\widetilde \Theta_{i,N}$ computed by~\eqref{eq:nonApproximateTL}, the (data-driven) \textit{$N$-th approximate (nonlinear) parametric moment} (via basis functions) of system~\eqref{eq:nonlinearsys} at $(s,l)$ on $\mathcal{P}$.
\end{definition}

The data-driven $N$-th approximate nonlinear parametric moment enjoys an asymptotic property, as shown below.

\begin{theorem}
    Suppose Assumptions \ref{ass:nonlinearSGObservable}, \ref{ass:nonlinearBasisSpace}, \ref{ass:nonStableInterconnection}, and \ref{ass:unisolvantConditionNonlinear} hold. Then $\lim_{t_i \to \infty}\bigl(\kappa(\omega,p) - \lim_{N \to Z}\widetilde\kappa_{i,N}(\omega,p)\bigr)=0$.
\end{theorem}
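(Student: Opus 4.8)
The plan is to mirror the proof of the linear counterpart, Theorem~\ref{thm:convergeBasisFunsDD}, by decomposing the nested limit into two independent convergence statements: first the convergence of the data-driven estimate to its model-based basis-function counterpart as the measurement window recedes ($t_i \to \infty$), and then the convergence of that model-based approximation to the true nonlinear parametric moment as the number of basis functions grows ($N \to Z$). Since $\kappa(\omega,p)$ does not depend on $t_i$, the object to analyse is $\lim_{t_i\to\infty}\lim_{N\to Z}\widetilde\kappa_{i,N}(\omega,p)$, and I would show it equals $\kappa(\omega,p)$. Note that under Assumptions~\ref{ass:nonlinearSGObservable} and~\ref{ass:nonStableInterconnection}, Lemma~\ref{lm:nonMoment2SteadyState} already guarantees that $\pi$ (hence $\kappa$) is well defined, so the target quantity is meaningful.

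First I would fix $N$ and treat the data-driven limit. Because $\widetilde\kappa_{i,N}(\omega,p)=H_N(\omega,p)\widetilde\Theta_{i,N}$ and the row vector $H_N(\omega,p)$ in~\eqref{eq:nonAppMMMatrixForm} is independent of the time index $t_i$, Theorem~\ref{thm:nonlinearDDWeightVector}---which, under the standing assumptions, gives $\lim_{t_i\to\infty}\widetilde\Theta_{i,N}=\widetilde\Theta_N$---yields directly $\lim_{t_i\to\infty}\widetilde\kappa_{i,N}(\omega,p)=H_N(\omega,p)\widetilde\Theta_N=\widetilde\kappa_N(\omega,p)$ for each fixed $N$.

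Next I would address the model-based limit $N\to Z$, which is the nonlinear analogue of Theorem~\ref{thm:convergeBasisFuns} and is not stated separately in the excerpt. I would establish it by the same projection argument: under Assumption~\ref{ass:nonlinearBasisSpace} the true moment admits the exact expansion $\kappa(\omega,p)=\sum_{j=1}^{Z}\theta_j\eta_j(\omega,p)$, while $\widetilde\Theta_N$ is the least-squares (orthogonal-projection) coefficient vector built from the first $N$ basis functions, with uniqueness ensured by the unisolvency of Assumption~\ref{ass:unisolvantConditionNonlinear}. As $N\to Z$ the projection subspace exhausts the full span of $\{\eta_j\}$, so $\widetilde\Theta_N$ converges to $\Theta$ (in the sense of Theorem~\ref{thm:convergeBasisFuns}) and consequently $\widetilde\kappa_N(\omega,p)=H_N(\omega,p)\widetilde\Theta_N\to\kappa(\omega,p)$. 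Combining the two steps gives $\lim_{t_i\to\infty}\bigl(\kappa(\omega,p)-\lim_{N\to Z}\widetilde\kappa_{i,N}(\omega,p)\bigr)=\kappa(\omega,p)-\lim_{N\to Z}\widetilde\kappa_N(\omega,p)=0$.

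The main obstacle I anticipate is the legitimacy of interchanging the two limits: the statement places the $N\to Z$ limit \emph{inside} the $t_i\to\infty$ limit, whereas the building blocks above naturally supply the $t_i$-limit for each fixed $N$ and the $N$-limit for the model-based object. I would justify the swap exactly as in the linear case---invoking the uniform element-wise boundedness of the basis functions (Assumption~\ref{ass:nonlinearBasisSpace}) together with the exponential decay of the steady-state error term $\widetilde\epsilon$ exhibited in the proof of Theorem~\ref{thm:nonlinearDDWeightVector}, which render the $t_i$-convergence uniform in $N$---or, at minimum, follow the convention adopted for Theorem~\ref{thm:convergeBasisFunsDD} and simply chain the two established convergence results.
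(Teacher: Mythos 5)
Your proposal is correct and follows essentially the same route as the paper's own proof: well-definedness of $\kappa$ from Assumptions~\ref{ass:nonlinearSGObservable} and~\ref{ass:nonStableInterconnection}, Theorem~\ref{thm:nonlinearDDWeightVector} to pass $\lim_{t_i\to\infty}\widetilde\kappa_{i,N}=\widetilde\kappa_N$, and Assumption~\ref{ass:nonlinearBasisSpace} (the exact expansion of $\kappa$ in the basis) to conclude $\lim_{N\to Z}\widetilde\kappa_N=\kappa$, exactly mirroring the linear Theorem~\ref{thm:convergeBasisFunsDD}. If anything, you are more careful than the paper, which silently chains the two limits without addressing the interchange issue you flag; your remark on uniformity is a reasonable refinement but not part of the paper's argument.
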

\begin{proof}
    If Assumptions \ref{ass:nonlinearSGObservable} and \ref{ass:nonStableInterconnection} hold, then the parametric moment $\kappa$ of system \eqref{eq:nonlinearsys} is well-defined on $\mathcal{P}$. In addition if Assumptions~\ref{ass:nonlinearBasisSpace} and~\ref{ass:unisolvantConditionNonlinear} hold, by Theorem~\ref{thm:nonlinearDDWeightVector},
    \begin{equation*}
        \lim_{t_i \to \infty}\bigl(\kappa(\omega,p) - \lim_{N \to Z}\widetilde\kappa_{i,N}(\omega,p)\bigr) = \kappa(\omega,p) - \lim_{N \to Z}\widetilde\kappa_{N}(\omega,p),
    \end{equation*}
    and by Assumption \ref{ass:nonlinearBasisSpace}, $\lim_{N \to Z}\widetilde\kappa_{N}(\omega,p) {=} \kappa(\omega,p)$.
\end{proof}

With Definition \ref{def:nonEstimatedparaMoment}, a family of parametric reduced-order models that match the $N$-th approximate nonlinear parametric moment of system \eqref{eq:nonlinearsys} at $(s,l)$ on $\mathcal{P}$ is given by
\begin{equation}\label{eq:appNonlinearROM}
    \begin{aligned}
        \dot{\xi}(t, p) &= s(\xi) - \delta(\xi, p)l(\xi) + \delta(\xi, p) u(t), \\
        \psi(t,p) &= \widetilde\kappa_{i,N}(\xi, p),
    \end{aligned}
\end{equation}
where $\delta$ is any mapping\footnote{A mapping with such a property is, for instance, any $\delta$ that renders the origin of \eqref{eq:appNonlinearROM} asymptotically stable in the first approximation.} such that the equation 
\begin{multline*}
    \frac{\partial q(\omega,p)}{\partial \omega} s(\omega)=s(q(\omega,p)) - \delta(q(\omega,p), p) l(q(\omega,p)) \\+ \delta(q(\omega,p), p) l(\omega)
\end{multline*}
has the unique solution $q(\omega,p)=\omega$ for all $p \in \mathcal{P}$.

\subsection{Illustration of the Nonlinear Results}\label{sec:nonlinearExample}
    We consider a $20\, \times\, 1$ MW wind farm on a~$66$~kV distribution feeder as a practical testing environment. The basic layout of the wind farm used in this example is shown in Fig. \ref{fig:layout_windFarm}. The triangles represent Type $4$ wind turbine generators (WTGs), and the wind farm of interest comprises~$2$ strings of WTGs ($10$ WTGs on each string). These WTGs are connected to the distribution feeder through a wind farm transformer. The feeder side of the wind farm transformer is denoted as the point of common coupling (PCC), assuming that some other loads or generators may be connected to the point in real-world scenarios. The dynamical model of each WTG, including the impedance on the generator side, is of dimension $17$, and the readers are referred to \cite{gong2024model,zhang2025a} for details.

    \begin{figure}[!t]
    \centerline{\includegraphics[width=\columnwidth]{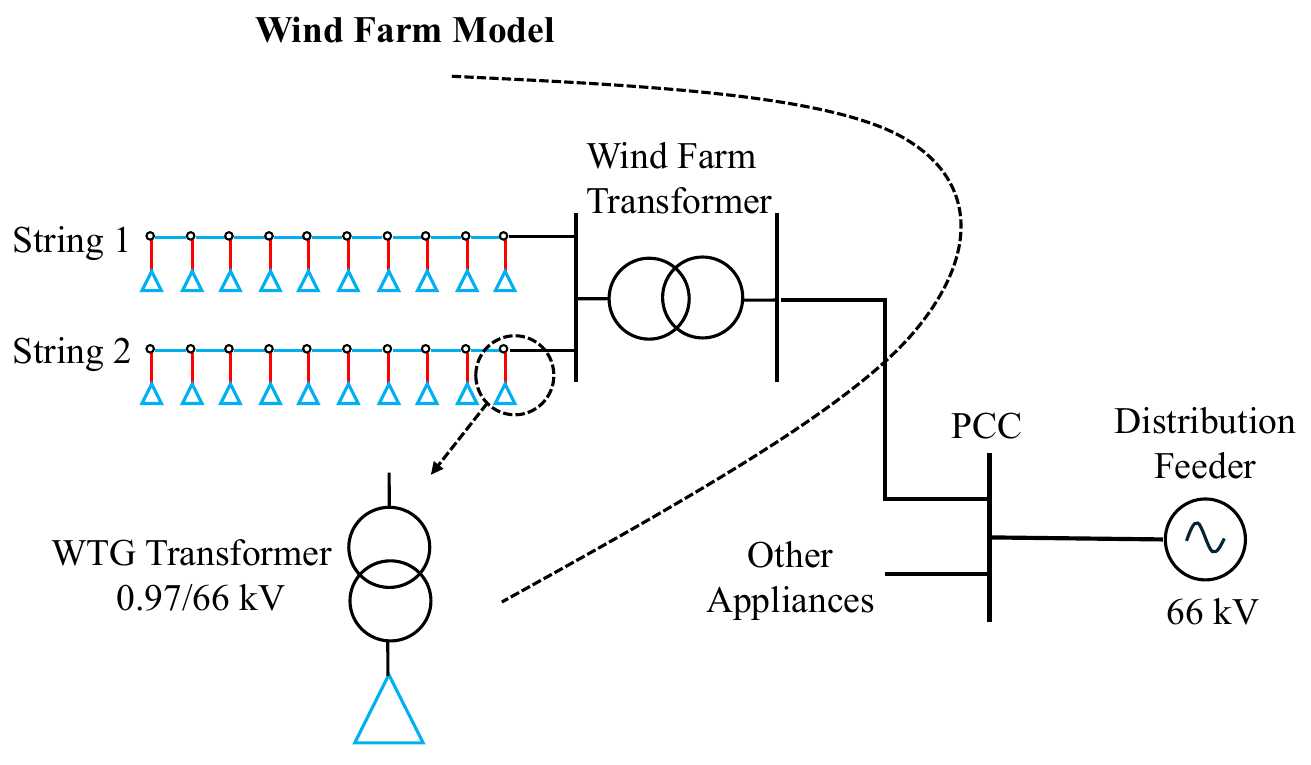}}
    \caption{The basic layout of a wind farm connected to a distribution feeder~\cite{zhang2025a}.}
    \label{fig:layout_windFarm}
    \end{figure}

    The entire wind farm is modelled using Matlab/Simulink resulting in a nonlinear system of order $344$. The input and output of the system are selected as the root mean square (rms) value of the PCC voltage and the active power of the wind farm, respectively. The system is considered parametrised by the wind speed which varies from $6$ to $10$ m/s.
    
    In theory, the input signal is typically assumed to be constant at $1$ pu. However, in reality, the loads and generators connected to PCC will cause voltage quality issues, \eg, slow voltage variations and flicker \cite{tande2002applying}. These issues can result in the input signal oscillating around the nominal value at multiple frequencies. Therefore, in this example, the input signal is modelled as the sum of the constant nominal value of $1$ pu and a series of sinusoidal perturbations\footnote{Such modelling procedure is reasonable because the PCC voltage can be measured and then the dominant frequencies of the voltage signal can be obtained by the Fourier transform.}. Specifically, three sinusoids of periods $24$ h ($f_1 = 1.157 \times 10^{-5}$ Hz), $40$~min ($f_2 = 0.025$ Hz) and $7$~min ($f_3 = 0.143$ Hz), and of amplitudes within $0.1$ pu are used to model slow voltage variations\footnote{The values are chosen according to \cite{tande2002applying}, namely the slow voltage variations are allowed to be within $\pm 10 \%$ of the nominal value.}; and two sinusoids of frequencies $f_4 = 0.11$ Hz and $f_5 = 0.2$ Hz, and of amplitudes within $0.05$ pu are used to model flickers\footnote{The values are chosen according to \cite{tande2002applying} and \cite{dejamkhooy2014modeling} in which it is shown that flickers are generally within $\pm 5 \%$ of the nominal value and of frequencies from $0.1$ to $35$ Hz.}.
    
    To ensure that the parametric reduced-order model can capture the dynamical behaviour of the wind farm in this scenario of interest, we consider the following signal generator
    \begin{equation}\label{eq:nonSignalGeneratorWindFarm}
        \dot \omega(t) = S \omega(t), \quad u(t) = L \omega(t),
    \end{equation}
    where $S$ is constructed in real Jordan form such that
    $\sigma(S) = \{0, \pm 2\pi f_i\}_{i=1}^{5}$, 
    and $L \in \mathbb{R}^{1 \times 11}$ is selected such that $(S, L)$ is observable. The signal $\omega$ and the output $y$ are evaluated over $T_i^h$, where $t_{i-h+1} = 148.67$ s and $t_i = 200$ s, with $h = 440$. The number of wind speed samples is selected as $K = 17$, and these samples are equally spaced on $\mathcal{P} = [6, 10]$. With these settings, we use Theorem \ref{thm:nonlinearDDWeightVector} to compute the parametric moment by choosing $N = 40$ 
    radial basis functions (RBFs) with Gaussian kernel $\eta_j(\omega, p)=\exp{(-\|[\omega\quad p]^\top - c_j\|_2^2/(2\tilde{\sigma}_j^2))}$, all with the same $\tilde{\sigma}_j = 1$ and with centres $c_j$ randomly sampled from the space $[2\omega_{\text{min}}-\omega_{\text{max}}, 2\omega_{\text{max}}-\omega_{\text{min}}] \times [2p_{\text{min}}-p_{\text{max}}, 2p_{\text{max}}-p_{\text{min}}]$, where $\omega_{\text{min}}$ and $\omega_{\text{max}}$ ($p_{\text{min}}$ and $p_{\text{max}}$) represent the minimum and maximum values of the $\omega$ samples (wind speed samples), respectively. The obtained parametric moment is utilised to construct the parametric reduced-order model (of order $\nu = 11$) in the form of \eqref{eq:appNonlinearROM} with $\delta$ a constant vector whose entries are all equal to $220$. Such selection of $\delta$ guarantees the stability of the reduced-order model.

    We simulate the output responses of the full-order model (FOM), \ie, the wind farm, and of the obtained reduced-order model (ROM) in the scenario where voltage quality issues occur at $t = 80$ s, when the systems are in their steady states with nominal input $1$ pu. Fig. \ref{fig:output_windFarm} depicts the output responses of the FOM (solid/blue line) and the ROM (dashed/red line) for various values of the wind speed $p$. It is emphasised that the time required to simulate the FOM at a single parameter value is around $5$ hours while the time to simulate the ROM is only $1$ second, which elucidates the need of model reduction for such large-scale systems. Fig.~\ref{fig:error_windFarm} shows the errors between the output response of the FOM and that of the ROM, \ie, $y(t, p) - \psi(t, p)$, for the corresponding wind speed values. It is noted that the ROM captures well the steady-state behaviour of the FOM with steady-state errors within $-0.1$ to $0.1$ pu, not only for the sampled wind speed $p = 7.5$, but also for the wind speeds $p = 6.4$ and $p = 9.3$ that \textit{do not} belong to the sampled set.

    \begin{figure}[!t]
    \centerline{\includegraphics[width=\columnwidth]{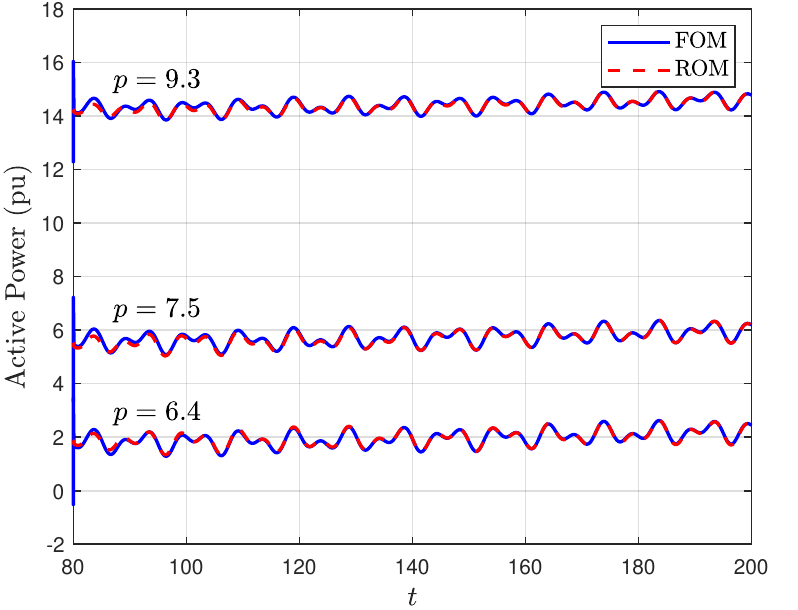}}
    \caption{Time histories of the output response of the full-order model (FOM) (solid/blue line) and of the output response of the obtained parametric reduced-order model (ROM) (dashed/red line) for various wind speed values.}
    \label{fig:output_windFarm}
    \end{figure}

    \begin{figure}[!t]
        \centerline{\includegraphics[width=\columnwidth]{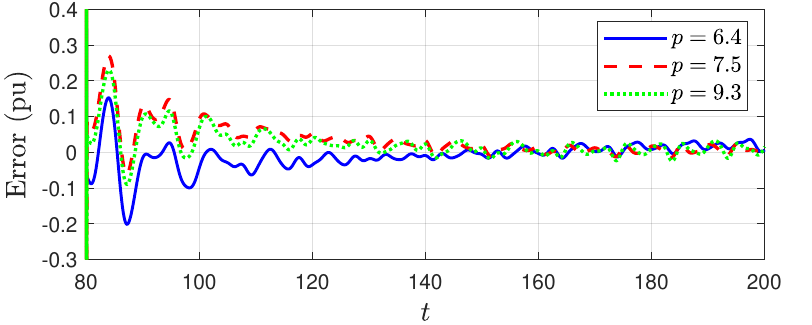}}
        \caption{Errors between the output response of the full-order model (FOM) and the output response of the obtained parametric reduced-order model (ROM) for three different wind speed values.}
        \label{fig:error_windFarm}
    \end{figure}

\section{Conclusion}\label{sec:conclusion}
We have presented a series of methods to address the problem of model reduction by moment matching for linear and nonlinear parametric systems. In doing so, the notion of moment has been extended to both linear and nonlinear parametric cases. 
Furthermore, we have provided a method for preserving the asymptotic stability and dissipativity properties of the original system.
For linear systems, approaches that approximate the parametric moment via series expansion and basis functions have been proposed. 
In addition, we have provided a data-driven enhancement of the approach via basis functions. 
For nonlinear systems, we have proposed a data-driven approach via basis functions to approximate the nonlinear parametric moment. Exploiting these approximations, families of parametric reduced-order models have been constructed. 
Finally, the use of the proposed approaches has been illustrated through simulations.


\section*{References}
\bibliographystyle{IEEEtran}
\bibliography{ref}

\begin{IEEEbiography}[{\includegraphics[width=1in,height=1.25in,clip,keepaspectratio]{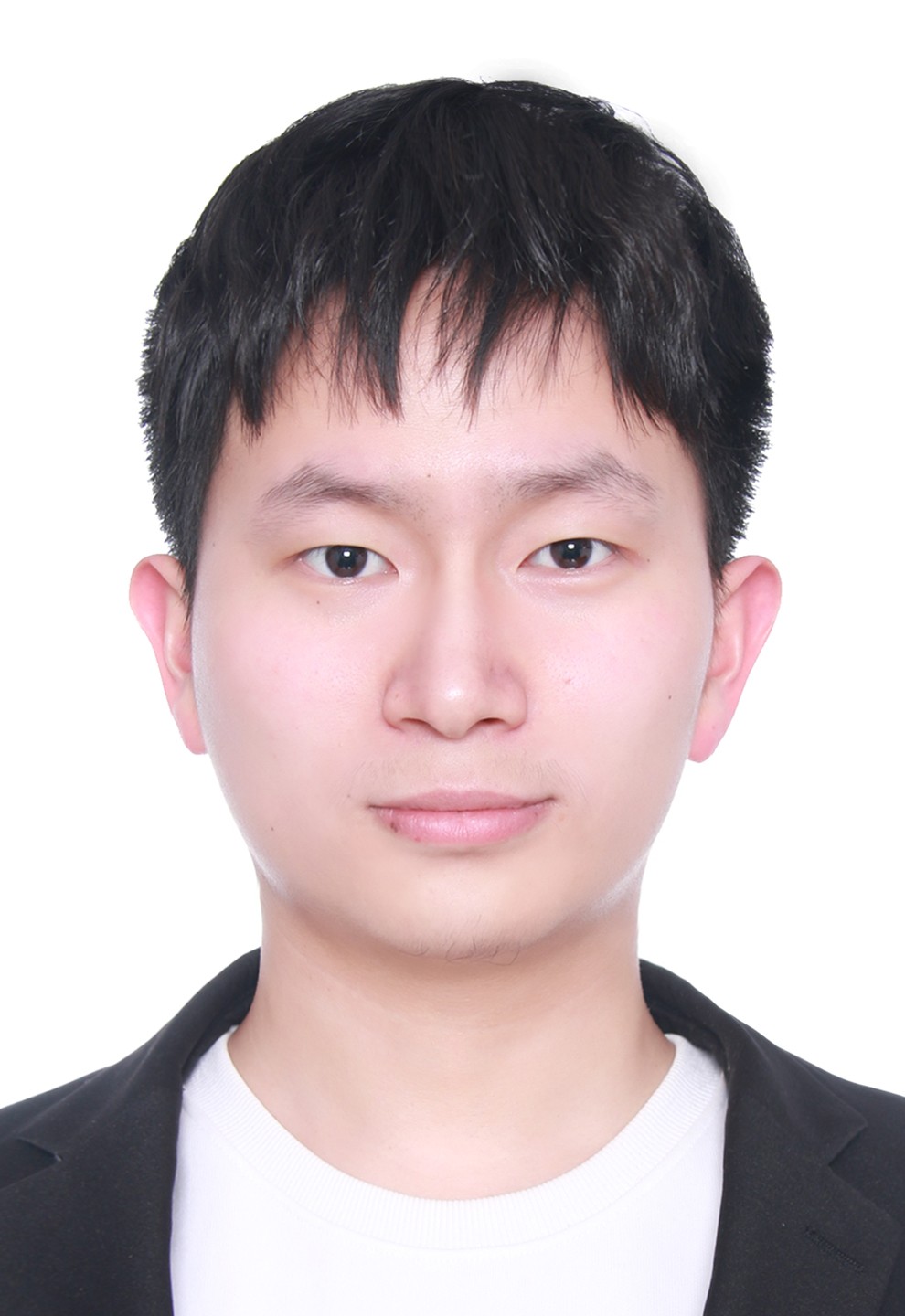}}]{Hanqing Zhang} (Student Member, IEEE) was born in Jiangsu, China, in 1999. He received his B.Eng. degree in Detection, Guidance, and Control from Harbin Engineering University, China, in 2021, and M.Sc. degree in Control and Optimisation from Imperial College London, UK, in 2022. He is currently pursuing his Ph.D. degree in the MAC-X Lab, Control and Power Group, at Imperial College London. His current research interests are focused on model reduction and power systems.
\end{IEEEbiography}

\begin{IEEEbiography}[{\includegraphics[width=1in,height=1.25in,clip,keepaspectratio]{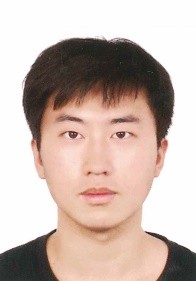}}]{Junyu Mao} (Student Member, IEEE) is a Ph.D. candidate at the MAC-X Lab, Imperial College London. He received his B.Eng. in Electrical and Electronic Engineering from the University of Liverpool in 2018, and two M.Sc. degrees: one in Control Systems from Imperial College London (2019), and another in Data Science and Machine Learning from University College London (2020). In 2021, he joined the Control and Power Group at Imperial College London to pursue his Ph.D. degree. His research focuses on the theoretical foundations of reduced-order modelling and data-driven control for large-scale dynamical systems.
\end{IEEEbiography}

\begin{IEEEbiography}[{\includegraphics[width=1in,height=1.25in,clip,keepaspectratio]{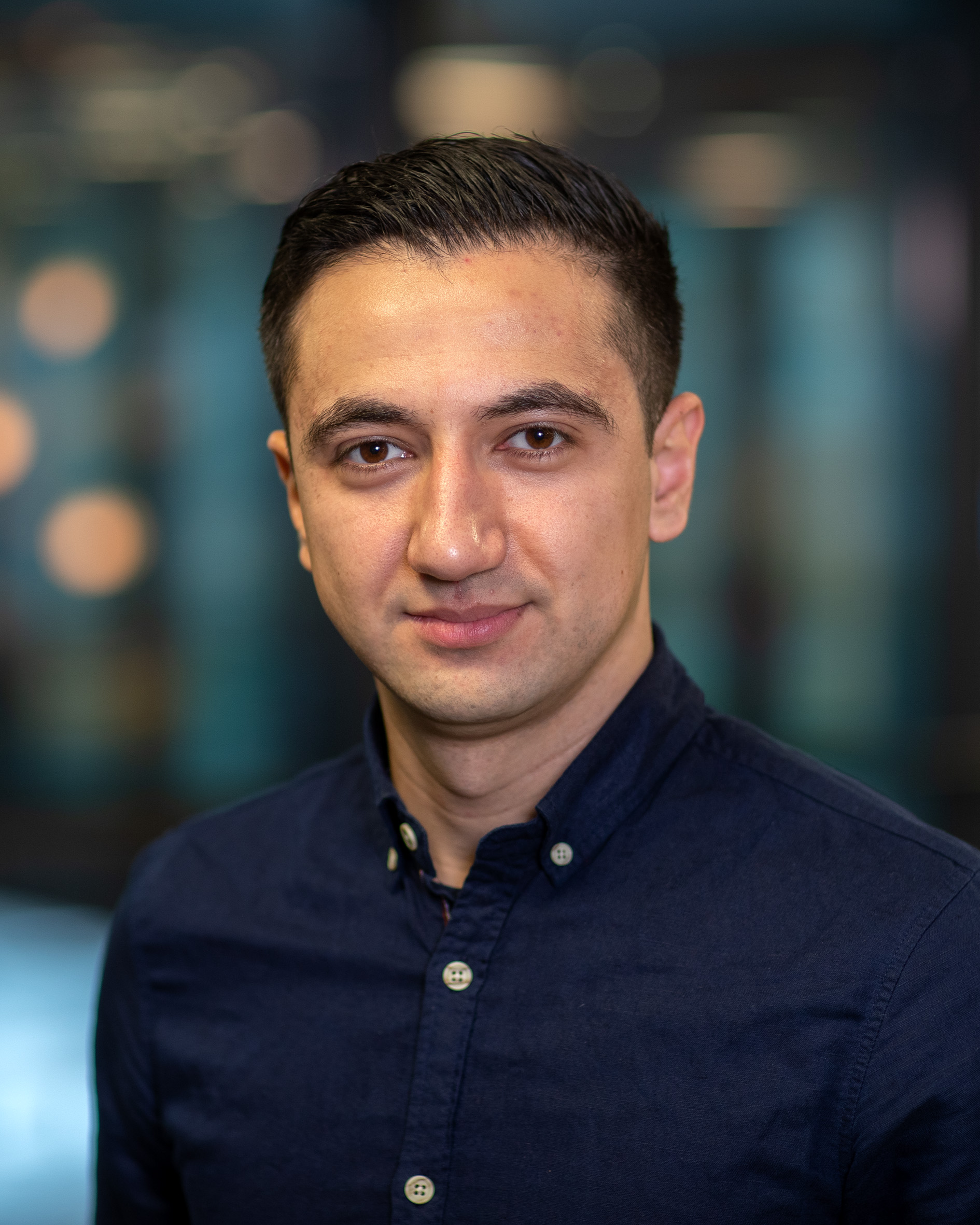}}]{Fahim Shakib} (Member, IEEE) is a Research Associate in the Control and Power Group at Imperial College London. He received both his M.Sc. (cum laude) and Ph.D. (cum laude) in Mechanical Engineering from Eindhoven University of Technology, with his doctoral work focused on data-driven modelling and complexity reduction for nonlinear systems with stability guarantees. His research interests include system identification, model reduction, and control of nonlinear systems.
\end{IEEEbiography}

\begin{IEEEbiography}[{\includegraphics[width=1in,height=1.25in,clip,keepaspectratio]{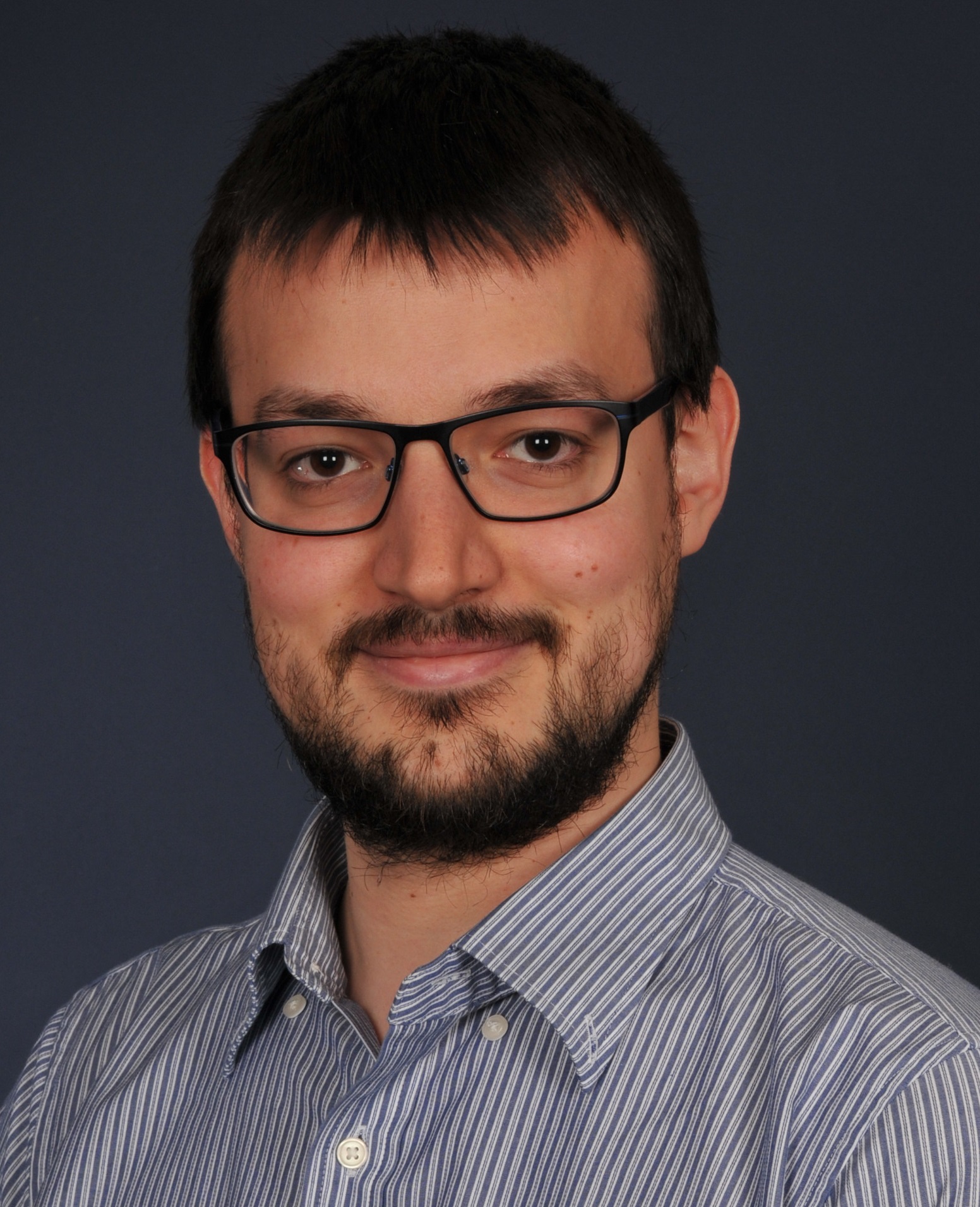}}]{Giordano Scarciotti} (Senior Member, IEEE) received his B.Sc. and M.Sc. degrees in Automation Engineering from the University of Rome “Tor Vergata”, Italy, in 2010 and 2012, respectively, and his Ph.D in Control Engineering and M.Sc. in Applied Mathematics from Imperial College London, UK, in 2016 and 2020, respectively.
He is currently an Associated Professor at Imperial. He was a visiting scholar at New York University in 2015 and at University of California Santa Barbara in 2016 and a Visiting Fellow of Shanghai University in 2021-2022. He is the recipient of the IET Control~\& Automation PhD Award (2016), the Eryl Cadwaladr Davies Prize (2017), an ItalyMadeMe award (2017), and the IEEE Transactions on Control Systems Technology Outstanding Paper Award (2023). He is a member of the EUCA CEB, and of the IFAC and IEEE CSS TCs on Nonlinear Control Systems. He is Associate Editor of Automatica. He was the NOC Chair for the ECC 2022 and of the 7th IFAC Conference on Analysis and Control of Nonlinear Dynamics and Chaos 2024, and the Invited Session Chair and Editor for the IFAC Symposium on Nonlinear Control Systems 2022 and 2025, respectively.
\end{IEEEbiography}

\end{document}